\newtheorem{theorem}{Theorem}[section]
\newtheorem{corollary}[theorem]{Corollary}
\newtheorem{lemma}[theorem]{Lemma}
\newtheorem{proposition}[theorem]{Proposition}
\newtheorem{definition}[theorem]{Definition}
\newtheorem{observation}{Observation}
\newcommand\DELETE[1]{}
\begin{document}

\title{On a special class of boxicity 2 graphs}
\author{
{\sc Sujoy Kumar Bhore$^{(a)}$}, {\sc Dibyayan Chakraborty$^{(b)}$}, {\sc Sandip Das$^{(b)}$}, \\{\sc Sagnik Sen$^{(c)}$}\\
\mbox{}\\
{\small $(a)$ Ben-Gurion University, Beer-Sheva, Israel}\\
{\small $(b)$ Indian Statistical Institute, Kolkata, India}\\
{\small $(c)$ Indian Statistical Institute, Bangalore, India}
}

\date{}

\maketitle

\maketitle
\begin{abstract}
 We define and study a class of graphs, called 2-stab interval graphs (2SIG),  
with boxicity 2 which properly contains the class of interval graphs.
A 2SIG is an axes-parallel rectangle intersection graph where the rectangles have unit height (that is, length of the side parallel to $Y$-axis) and intersects either of the two 
fixed lines, parallel to the $X$-axis, distance $1+\epsilon$ ($0 < \epsilon < 1$) apart. Intuitively, 2SIG is a 
graph obtained by putting some edges between two interval graphs in a particular rule.
It turns out that for these kind of graphs, the chromatic number of any of its induced subgraphs is bounded by 
twice of its (induced subgraph) clique number.
This shows that the graph, even though not perfect, is not very far from it.
Then we prove similar results for some subclasses of 2SIG and provide efficient algorithm for finding their clique
number. We provide a matrix characterization for a subclass of 2SIG graph.
 \end{abstract}

\noindent \textbf{Keywords:}  boxicity, chromatic number, clique number, perfect graph, matrix characterization.

\section{Introduction}
A \textit{geometric intersection graph}~\cite{golumbic} is a graph whose vertices are represented by geometric objects and two vertices are adjacent 
 if their corresponding geometric objects  intersect. 
\textit{Boxicity}~\cite{krato2}  of  a graph $G$ is the minimim $k$ such that $G$ can be expressed as a geometric intercestion graph of 
 of  axes-parallel $k$ dimensional rectangles. The class of \textit{boxicity $k$ graphs} is the class of graphs with boxicity at most $k$. 
The class of graphs with boxicity $1$ is better known as
 \textit{interval graphs}~\cite{golumbic}
 (intersection  of  real intervals) 
 while the class of graphs with boxicity $2$ is better known as 
 \textit{rectangle intersection graphs}~\cite{krato2} 
 (intersection of axes-parallel 
 rectangles). 


It is known that several questions 
(for example, recognition, determining clique number, determining chromatic number) 
that are $NP$-hard in general becomes polynomial time solvable when restricted to the class of 
interval graphs 
 while 
they remain $NP$-hard for the family of graphs with boxicity $k$ (for $k \geq 2$)~\cite{krato2}.
The reason for this dichotomy is probably because interval graphs are \textit{perfect} 
(defined in Section~\ref{preliminaries})
 while boxicity $k$ (for $k \geq 2$) graphs are not perfect
(those questions are polynomial time solvable for perfect graphs as well)~\cite{golumbic}. 

Naturally we are interested in exploring the objects that lie in between, that is, 
the proper subclasses of graphs with boxicity 2 that contains all interval graphs. 
Several such graph classes have been defined and studied~\cite{zhang}~\cite{hell}~\cite{brand}. 
In this article, we too define such a graph class and study its different aspects. 
We keep in mind that `perfectness' is probably the key word here. 
Our class of graphs is not perfect but it contains all interval graphs and is a proper subclass of boxicity 2 graphs. 
Moreover, our graph class is based on local structures of boxicity 2 graphs in some sense. 
Thus, the study of this class may help us understand the structure of boxicity 2 graphs in a better way. 

As a matter of fact, the definition of our graph class is motivated from the definition of 
a well-known class of perfect graphs, the split graphs. 
A \textit{split graph} is obtainted by putting edges between a clique and a set of independent vertices~\cite{golumbic}. 
Note that a complete graph and an independent set are the two  extreme trivial examples of perfect graphs. 
So when we put  edges between these two types of perfect graphs,  what we obtain
is again perfect. 

Motivated by this example, we wondered what would happen if we put edges between other kinds of perfect graphs. 
We take two interval graphs and put edges in between, following a particular rule. 
What we  obtain is a class of geometric intersection graphs, not perfect, with certain properties which enables us to call them ``nearly perfect''~\cite{gyar}. That is, the chromatic number of each induced subgraph is bounded by a function of its clique number; 
a linear function in our case.

 Let $y = 1$ be the \textit{lower stab line}  and 
$y = 2+ \epsilon $ be the \textit{upper stab line} where $ \epsilon \in (0,1)$  is a constant. 
Now consider
axes-parallel 
 rectangles with unit height (length of the side parallel to $Y$-axis) that intersects one of the stab lines.
 A  \textit{2-stab interval graph (2SIG)} is a graph $G$ that can be represented as an intersection graph of such rectangles.
Such a  representation $R(G)$ of $G$ is called a \textit{2-stab  representation}  (for example, see Fig.~\ref{bridge}). 
 A 2SIG may have more than one 2-stab representation.

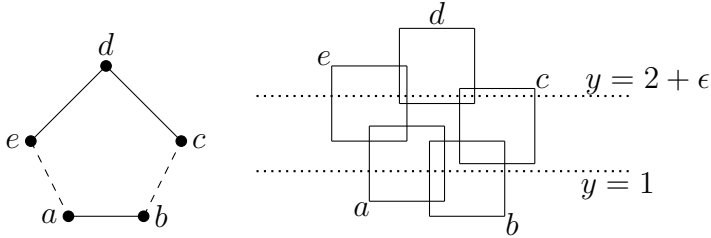
\begin{figure}

\centering
\begin{tikzpicture}

\filldraw [black] (0,0) circle (2pt) {node[left]{$a$}};
\filldraw [black] (1,0) circle (2pt) {node[right]{$b$}};
\filldraw [black] (1.5,1) circle (2pt) {node[right]{$c$}};
\filldraw [black] (-.5,1) circle (2pt) {node[left]{$e$}};
\filldraw [black] (.5,2) circle (2pt) {node[above]{$d$}};

\draw[-] (0,0) -- (1,0);
\draw[-] (-.5,1) -- (.5,2);
\draw[-] (1.5,1) -- (.5,2);

\draw[dashed] (-.5,1) -- (0,0);
\draw[dashed] (1,0) -- (1.5,1);


\draw[-] (4,.2) -- (5,.2);
\draw[-] (4,1.2) -- (5,1.2);
\draw[-] (4,.2) -- (4,1.2);
\draw[-] (5,.2) -- (5,1.2);

\node at (3.9,.1) {$a$};

\draw[-] (4.8,0) -- (5.8,0);
\draw[-] (4.8,1) -- (5.8,1);
\draw[-] (4.8,0) -- (4.8,1);
\draw[-] (5.8,0) -- (5.8,1);

\node at (5.9,-.1) {$b$};

\draw[-] (3.5,1) -- (4.5,1);
\draw[-] (3.5,2) -- (4.5,2);
\draw[-] (3.5,1) -- (3.5,2);
\draw[-] (4.5,1) -- (4.5,2);

\node at (3.4,2.1) {$e$};

\draw[-] (4.4,1.5) -- (5.4,1.5);
\draw[-] (4.4,2.5) -- (5.4,2.5);
\draw[-] (4.4,1.5) -- (4.4,2.5);
\draw[-] (5.4,1.5) -- (5.4,2.5);

\node at (4.9,2.7) {$d$};

\draw[-] (5.2,.7) -- (6.2,.7);
\draw[-] (5.2,1.7) -- (6.2,1.7);
\draw[-] (5.2,.7) -- (5.2,1.7);
\draw[-] (6.2,.7) -- (6.2,1.7);

\node at (6.3,1.8) {$c$};


\draw[thick,dotted] (2.5,1.6) -- (7.5,1.6);

\node at (7.7,1.8) {$y = 2 + \epsilon $};

\draw[thick,dotted] (2.5,.6) -- (7.5,.6);

\node at (7.3,.4) {$y = 1 $};

\end{tikzpicture}

\caption{A representation (left) of a $2SIG$ graph (right).}\label{bridge}

\end{figure}

 Notice that, given a representation $R(G)$ of $G$, each such rectangle intersects exactly one stab line partitioning the vertex set 
 $V(G)$ in two disjoint parts, the \textit{lower partition} $V_1$ (vertices with corresponding rectangles intersecting the lower stab line) and 
 the
 \textit{upper partition} $V_2$ (vertices with corresponding rectangles intersecting the upper stab line). 
  Observe that such a vertex partition depends on the representation and is not unique. 
  In the remainder of the article, whenever we speak about a 2SIG with a vertex partition $V(G) = V_1 \sqcup V_2$ 
 we will mean the partitions are lower and upper partition due to a representation. 
 
 Also note that the induced subgraphs $G[V_1]$ and $G[V_2]$ are interval graphs with intervals corresponding to the projection of their rectangles on $X$-axis. 
 Hence, indeed, a 2SIG is obtained by putting some edges between  two different interval graphs. 
Also, observe that the definition of 2SIG does not depend on the specific value of the constant $ \epsilon $ as long as 
it belongs to the interval $(0,1)$. 
Furthermore, observe that a rectangle interval graph with rectangles with unit height locally looks like a 2-stab interval graph.

The article is organized in the following manner. In Section~\ref{preliminaries}
we present the necessary definitions, notations and some observations.
We study the clique number and the chromatic number of 2SIG in Section~\ref{clique number} and justify our claim that 
2SIG and some of its subclasses are ``nearly perfect" even though not perfect.
 We provide a matrix characterization for a subclass of 2SIG graph in Section~\ref{sec matrix}.
Finally, we conclude the article in Section~\ref{conclusion}.

\section{Preliminaries}\label{preliminaries}
The \textit{clique number} $\omega(G)$ of a graph $G$ is the \textit{order} (number of vertices) of the biggest complete subgraph of $G$. 
A \textit{$k$-coloring} of a graph $G$ is an assignment of $k$ colors  to the vertices of $G$ such that adjacent vertices receive different colors.
The \textit{chromatic number} $\chi(G)$ of a graph $G$ is the minimum $k$ such that $G$ admits a $k$-coloring. 
A graph $G$ is \textit{perfect} if $\omega(H) = \chi(H)$ for all induced subgraph $H$ of $G$. 

A graph $G$ is \textit{$\chi$-bounded} if $\chi(H) \leq f(\omega(H))$ for all induced subgraph $H$ of $G$ where $f$ 
is a bounded integer-valued function~\cite{gyarfas}. This is what we meant when we used the informal term ``nearly perfect".

Recall the definition of 2-stab interval graphs from the previous section. 
Now by putting more restrictions on our definition of 2SIG 
we  obtain a few other interesting subclasses of 2SIG that we are going to study in this article.

A  \textit{2-stab unit interval graph (2SUIG)} is a 2SIG with a representation where each rectangle is a unit square. The corresponding representation is 
a \textit{2SUIG representation}. 
A  \textit{proper 2-stab interval graph (proper 2SIG)} is a 2SIG with a 
representation where the projection of a rectangle on $X$-axis does not properly contain the projection of any other rectangle on $X$-asis. 
A  \textit{2-stab independent interval graph (2SIIG)} is a 2SIG with a 
representation where the upper partition induces an independent set. 
The corresponding representation is 
a \textit{2SIIG representation}.


\begin{figure}

\centering
\begin{tikzpicture}

\filldraw [black] (0,0) circle (2pt) {node[left]{}};
\filldraw [black] (2,0) circle (2pt) {node[right]{}};
\filldraw [black] (4,0) circle (2pt) {node[left]{}};
\filldraw [black] (6,0) circle (2pt) {node[right]{}};
\filldraw [black] (8,0) circle (2pt) {node[left]{}};
\filldraw [black] (10,0) circle (2pt) {node[right]{}};

\filldraw [black] (0,2) circle (2pt) {node[left]{}};
\filldraw [black] (2,2) circle (2pt) {node[left]{}};
\filldraw [black] (6,2) circle (2pt) {node[left]{}};
\filldraw [black] (8,2) circle (2pt) {node[left]{}};

\draw[-] (0,0) -- (10,0);

\draw[dashed] (0,0) -- (0,2);
\draw[dashed] (2,0) -- (2,2);
\draw[dashed] (6,0) -- (6,2);
\draw[dashed] (6,0) -- (8,2);

\draw (4,0) .. controls (6,-.8)  .. (8,0);

\end{tikzpicture}

\caption{Example of a bridge triangle free 2SUIG which is also a 2SIIG.}\label{bridgetri}

\end{figure}
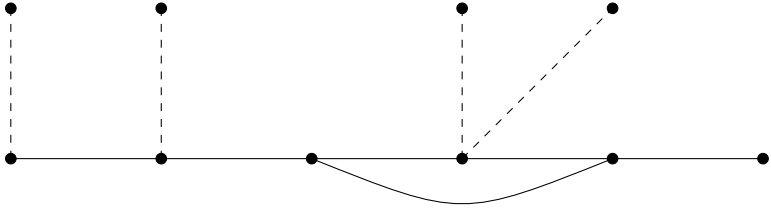

Let us fix a representation $R(G)$ of a 2-stab interval graph $G$ with corresponding lower and upper partitions $V_1$ and $V_2$, respectively.  
Then the set of \textit{bridge edges} $E_B$ is the set of edges (depicted using ``dashed" edges in the figures) 
between the vertices of $V_1$ and $V_2$   while the set \textit{bridge vertices} $V_B$ is the set of vertices incedent to bridge edges (see Fig.~\ref{bridge}). 
For some $v \in V(G)$, the set of \textit{bridge neighbors} $N_B(v)$ is the set of all vertices adjacent to $v$ by a bridge edge.
A \textit{bridge triangle} is a triangle (induced $K_3$) in which exactly two of its edges are bridge edges (note that, no triangle of $G$ can have exactly one or three edges from $E_B$). 
A \textit{bridge triangle free 2SUIG}  is a graph with at least one 2SUIG representation without any bridge triangle (see Fig.~\ref{bridgetri}).

An \textit{orientation} $\overrightarrow{G}$ of a graph $G$ is obtained by 
replacing its edges with \textit{arcs} (ordered pair of vertices). 
An orientation $\overrightarrow{G}$ of $G$ is a \textit{transitive orientation} 
if for each pair of arcs $(a,b)$ and $(b,c)$ we have the arc $(a,c)$ in $\overrightarrow{G}$.
We know that the complement of an interval graph admits a transitive orientation~\cite{mcconne}. 
Let $\overrightarrow{I^c}$ be  a transitive orientation  of the complement graph of an interval graph $I$. 

\section{Relation with other graph classes}

The class of 2SIG graphs can be thought of as a generalization of interval graphs. So, we wondered 
if there is any relation between 2SIG graphs and other generalization of interval graphs, such as,
2-interval graphs. A  \textit{2-interval graph} is a geometric intersection graph where each vertex corresponds to two real intervals~\cite{2interval}.

\begin{proposition}\label{2-int}
All bridge triangle free 2SUIG graphs are 2-interval graphs. 
\end{proposition}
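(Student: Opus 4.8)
The plan is to realise $G$ as the edge-union of two interval graphs on $V(G)$. This suffices: given interval graphs $H_1,H_2$ on the same vertex set, rescale their interval representations to lie inside $(0,1)$ and $(2,3)$ respectively and give each vertex the pair consisting of the interval it gets in each representation; intervals from different blocks never meet, so the resulting 2-interval graph has edge set exactly $E(H_1)\cup E(H_2)$. So I would fix a bridge triangle free 2SUIG representation of $G$ with lower and upper partitions $V_1,V_2$, write $E_1=E(G[V_1])$ and $E_2=E(G[V_2])$ for the non-bridge edges and $E_B$ for the bridge edges, and note that the $X$-projections of the lower (resp.\ upper) unit squares realise $G[V_1]$ (resp.\ $G[V_2]$) as a unit interval graph. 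The goal is then to build spanning interval subgraphs $H_1\supseteq G[V_1]$ and $H_2\supseteq G[V_2]$ of $G$ with $E(H_1)\cup E(H_2)=E(G)$ and with $V_2$ independent in $H_1$ and $V_1$ independent in $H_2$; all that is left to decide is how to split the bridge edges $E_B$ between $H_1$ and $H_2$.

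The first step is a bound on bridge degrees: $|N_B(v)|\le 2$ for every $v$. Indeed, for $v\in V_2$ the hypothesis forbids a bridge triangle, so $N_B(v)\subseteq V_1$ induces no edge of $G$ and hence the corresponding unit intervals are pairwise disjoint; each of them meets the unit interval of $v$, so all of their left endpoints lie in an interval of length $2$, and among any three of them two would be within distance $1$ and would therefore intersect (touching unit intervals intersect) --- a contradiction. Hence $E_B$, viewed as a bipartite graph between $V_1$ and $V_2$, has maximum degree at most $2$: it is a disjoint union of paths and even cycles, and in particular it decomposes into two matchings $M_1\sqcup M_2$.

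I would then put $E_1$ and $M_1$ into $H_1$ and $E_2$ and $M_2$ into $H_2$, so that $H_1$ is the unit interval graph $G[V_1]$ with some pendant vertices attached (each $M_1$-matched $w\in V_2$ becomes a degree-one vertex adjacent to its partner, and each vertex of $V_1$ gets at most one such pendant), and symmetrically for $H_2$. Each $H_i$ is visibly chordal, so the remaining --- and, I expect, hardest --- step is to rule out asteroidal triples in $H_i$, i.e.\ to show that this pendant-augmented unit interval graph is an interval graph. This is not automatic for arbitrary interval graphs: attaching a pendant to every vertex of a triangle yields the ``net'', whose three pendants form an asteroidal triple, so the two matchings cannot be chosen arbitrarily. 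My plan is to exploit that the at most two bridge neighbours of a vertex are pairwise-disjoint unit intervals trapped in a length-two window around a common unit interval, and so occupy a very restricted position in the unit interval representation of $G[V_1]$ (resp.\ $G[V_2]$): this should let me (i) pick a representation of $G[V_i]$ in which every vertex destined to carry a pendant owns a point of its interval that lies in no other interval, so the pendant can be inserted there, and (ii) re-route, between $M_1$ and $M_2$, the few bridge edges whose endpoints would otherwise create a net-like configuration, so that in each $H_i$ no three pendant-bearing vertices remain pairwise joined by paths each avoiding the third. Since any asteroidal triple of $H_i$ must use at least one pendant (pendants have degree one and $G[V_i]$ is itself an interval graph, so has none), these measures should destroy all of them; then $H_1$ and $H_2$ are interval graphs and the edge-union construction of the first paragraph finishes the proof.
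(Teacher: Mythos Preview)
Your decomposition is more intricate than necessary, and the step you yourself flag as the hardest --- ruling out asteroidal triples after attaching the pendants $M_i$ to $G[V_i]$ --- is left as a plan rather than carried out. You correctly observe that attaching one pendant per vertex to an interval graph can create a net, so the matchings $M_1,M_2$ cannot be chosen arbitrarily; but you neither exhibit a choice that works nor verify that the proposed ``rerouting between $M_1$ and $M_2$'' can always be done. As written this is a genuine gap.

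The paper avoids the whole difficulty by splitting $E(G)$ differently: it puts \emph{all} non-bridge edges into the first interval graph and \emph{all} bridge edges into the second. The first graph is then simply $G[V_1]\sqcup G[V_2]$, a disjoint union of two (unit) interval graphs, hence trivially an interval graph. For the second, your own lemma $|N_B(v)|\le 2$ already shows that the bridge-edge graph is bipartite of maximum degree two, so a disjoint union of paths and even cycles. One short extra argument rules out the cycles: take the leftmost $V_1$-vertex $u$ on a putative cycle; its two bridge neighbours $v,v'$ are non-adjacent, so the left one $v$ satisfies $s_v<s_u$, and the other cycle-neighbour $u'$ of $v$ would have to satisfy both $s_{u'}>s_u+1$ (non-adjacent to $u$, and $u$ is leftmost) and $s_{u'}\le s_v+1<s_u+1$, a contradiction. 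Thus the bridge-edge graph is a disjoint union of paths, which is an interval graph, and your own opening paragraph finishes the proof --- with no pendant or asteroidal-triple analysis required.
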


\begin{proof}
Let $G=(V_1\sqcup V_2,E)$ be a bridge triangle free 2SUIG. 
Note that $G[V_1]$ and $G[V_2]$ induces two unit interval graphs. 
We can assign an intervals to each of the vertices of $G$ such that the intersection graph of those intervals is the graph isomorphic to the disjoint union of $G[V_1]$ and $G[V_2]$. These intervals are the first set of intervals assigned to the vertices of $G$.

Now we want to assign a second set of intervals to the vertices of $G$ such that they intersects to represent the remaining edges of $G$. 
Note that the only edges that are not represented yet are exactly the set of bridge edges. 
As $G$ is bridge triangle free 2-stab unit interval graph, the set of bridge edges induces an interval graph isomorphic to disjoint union of paths. Thus, it is possible to assign a second set of intervals, each of them completely disjoint from the intervals belonging to the first set of intervals, to the vertices  such that the intersection graph is isomorphic to the graph induced by bridge edges of $G$.  
\end{proof}

It is well known that proper interval graphs are equivalent to unit interval graphs~\cite{bogart}. 
Interestingly, an analogous result exists for 2SUIG graphs.

\begin{proposition}
The class of proper 2SIG graphs is equivalent to the class of 2SUIG graphs.
\end{proposition}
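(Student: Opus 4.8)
The plan is to prove both inclusions. The easy direction is that every 2SUIG is a proper 2SIG: if every rectangle is a unit square, then in particular no projection onto the $X$-axis properly contains another (two distinct unit-length intervals can never be nested), so the same representation witnesses that the graph is a proper 2SIG. The substantive direction is the converse: given a proper 2SIG representation $R(G)$, I want to produce a 2SUIG representation of the same graph. First I would observe that the stab-line structure lets us treat the lower partition $V_1$ and the upper partition $V_2$ somewhat separately, and recall that $G[V_1]$ and $G[V_2]$ are the interval graphs obtained by projecting onto the $X$-axis. Since the representation is proper, each of these projected families is a proper interval representation, so by the classical Roberts/Bogart--West equivalence~\cite{bogart} each of $G[V_1]$ and $G[V_2]$ admits a \emph{unit} interval representation.

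The main work is then to re-coordinatize so that (i) both partitions simultaneously use unit-length intervals, (ii) the vertical placement still has every rectangle of height exactly $1$ touching its own stab line, and (iii) crucially, \emph{the set of bridge edges is preserved}. The natural approach is to build a single linear order on $V(G)$ refining the order information that the proper representation gives (e.g.\ order rectangles by left endpoint, which for a proper representation also orders them by right endpoint), and then lay out unit squares along the $X$-axis consistent with that order — one family of unit squares stabbed by $y=1$ and one stabbed by $y=2+\epsilon$ — spacing them so that two squares in the same partition overlap exactly when the corresponding original rectangles did. This is the standard ``snap to a grid / integral endpoints'' construction for unit interval graphs, applied twice.

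The hard part will be controlling the bridge edges: a bridge edge between $u\in V_1$ and $v\in V_2$ is present exactly when the original rectangles of $u$ and $v$ overlap in their $X$-projections \emph{and} are vertically close enough (the unit heights force their $Y$-intervals to meet since the stab lines are only $1+\epsilon$ apart). So once the vertical coordinates are fixed (each lower square occupying $Y\in[1-\delta,2-\delta]$-type positions and each upper square $Y\in[2+\epsilon-\delta,\dots]$, chosen so any lower square and any upper square have overlapping $Y$-ranges), a bridge edge is present iff the $X$-projections overlap. Therefore I must choose the two unit-interval representations of $G[V_1]$ and $G[V_2]$ \emph{compatibly on a common $X$-axis} so that for every $u\in V_1,v\in V_2$, the new projections of $u$ and $v$ overlap iff the old ones did. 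The cleanest way is to take the combined set of all $2|V(G)|$ endpoints of the proper representation, read off the induced linear order of the rectangles by left endpoints, check that in a proper representation containment-freeness makes overlap of any two rectangles (regardless of partition) a function purely of this order together with which endpoints interleave, and then realize that same order with evenly spaced unit squares — I would place the $k$-th rectangle in the sorted order at $X\in[k\gamma,\,k\gamma+1]$ for a suitably small $\gamma$, after arguing that consecutive ``blocks'' of mutually intersecting rectangles can be separated. I expect the bookkeeping to verify ``overlap is preserved for \emph{all} pairs, not just within a partition'' to be the only delicate point; the vertical placement and the within-partition unit property then follow immediately, completing the equivalence.
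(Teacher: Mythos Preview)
Your easy direction is fine and matches the paper. The hard direction, however, rests on a false claim: you assert that for $u\in V_1$, $v\in V_2$ ``the unit heights force their $Y$-intervals to meet since the stab lines are only $1+\epsilon$ apart,'' and then build your whole construction on the premise that a bridge edge is present iff the $X$-projections overlap. This is not true. A lower rectangle has $Y$-interval $[a,a+1]$ with $0\le a\le 1$, an upper one has $Y$-interval $[b,b+1]$ with $1+\epsilon\le b\le 2+\epsilon$, and these meet iff $b\le a+1$; plenty of lower--upper pairs fail this. Consequently there can be pairs $u\in V_1$, $v\in V_2$ whose $X$-projections overlap in the given proper representation but which are \emph{not} adjacent in $G$. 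Your plan to place all lower squares and all upper squares so that every lower--upper pair overlaps in $Y$, and then reproduce exactly the original $X$-overlap pattern, would add all such non-edges and yield the wrong graph. (Separately, the placement $[k\gamma,k\gamma+1]$ makes all intervals pairwise overlapping for small $\gamma$, so it is not a unit representation of the intended proper interval graph; but this is a secondary issue compared to the $Y$-overlap error.)

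The paper's argument sidesteps all of this with one clean move: project \emph{all} rectangles (both partitions together) onto the $X$-axis to get a single proper interval graph $P$, take any unit interval representation $U$ of $P$, and form the new rectangle $r'_v=U_v\times I_y(r_v)$, \emph{leaving the $Y$-interval of every vertex unchanged}. Each $r'_v$ is then a unit square and still meets its original stab line. Two new rectangles intersect iff their $U$-intervals overlap (equivalently, their old $X$-projections overlapped, since both represent $P$) \emph{and} their $Y$-intervals overlap (unchanged from the original), so the intersection graph is exactly $G$. No separate handling of the two partitions, no compatibility argument, and no assumption about which lower--upper pairs overlap vertically is needed.
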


\begin{proof}
It is easy to observe that a 2SUIG representation of a graph is also a proper 2SIG representation. 

Let $G$ admits a proper 2SIG representation $R$. 
Assume that a vertex $v$ of $G$ corresponds to a rectangle $r_v$. 
Let the projection of $r_v$ 
on $X$-axis be the interval $I_x(r_v)$ and let the projection of $r_v$ on $Y$-axis be the interval $I_y(r_v)$.
Thus, the rectangle $r_v$ is nothing but the cross product $I_x(r_v) \times I_y(r_v)$ of the two intervals.

Consider the intervals obtained from projecting the rectangles on $X$-axis. 
The intersection graph of these intervals will give us a proper interval graph $P$ according to the definition of a proper 2SIG. 
We know that every proper interval graph has a unit interval representation~\cite{golumbic}. Let $U$ be such a representation of $P$. Note that $P$ has all edges 
of $G$ but may have some additional edges as well. Those additional edges $uv$ are precisely those for which 
$I_x(r_u) \times I_x(r_v) \neq \emptyset $   and $I_y(r_u) \times I_y(r_v) = \emptyset $. 
Let $U_v$ be the interval corresponding to a vertex $v$ in $U$. Now consider the rectangle $r'_v = U_v \times I_y(r_v)$ for each vertex $v$ of $G$. Note that these rectangles are unit rectangles and their intersection graph is $G$. Also note that, as we have not changed the $Y$-co-ordinates of the rectangles, the  so obtained representation is still a 2SIG representation. Hence our new representation is indeed a 2SUIG representation of $G$. 
\end{proof}

\section{Clique number and chromatic number}\label{clique number}
A 2SUIG graph is obtained by putting some edges between two interval graphs. The perfectness of an 
interval graph implies that it has chromatic number equal to its clique number. Hence the observation follows.

\begin{observation}\label{uplow}
 Let $G = (V_1 \sqcup V_2, E)$ be a 2SIG graph with a given vertex partition. 
\begin{itemize}
\item[$(i)$] Then $max\{\omega(G[V_1]), \omega(G[V_2])\} \leq \omega(G) \leq \omega(G[V_1]) + \omega(G[V_2])$.

\item[$(ii)$] Then $max\{\chi(G[V_1]), \chi(G[V_2])\} \leq \chi(G) \leq \chi(G[V_1]) + \chi(G[V_2])$.
\end{itemize}
\end{observation}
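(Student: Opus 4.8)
The plan is to prove both chains of inequalities by exploiting the fact that $G[V_1]$ and $G[V_2]$ are induced subgraphs of $G$, together with the fact that $V(G)=V_1\sqcup V_2$ partitions the vertex set into exactly two parts. For part $(i)$, the lower bound $\max\{\omega(G[V_1]),\omega(G[V_2])\}\le\omega(G)$ is immediate: any clique of $G[V_i]$ is a clique of $G$, so the clique number cannot decrease when passing from $G[V_i]$ to $G$. For the upper bound, I would take a maximum clique $K$ of $G$ with $|K|=\omega(G)$ and split it as $K=(K\cap V_1)\sqcup(K\cap V_2)$; since $K\cap V_i$ is a clique in $G[V_i]$, we get $|K\cap V_i|\le\omega(G[V_i])$, and summing the two gives $\omega(G)=|K\cap V_1|+|K\cap V_2|\le\omega(G[V_1])+\omega(G[V_2])$.

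For part $(ii)$, the lower bound $\max\{\chi(G[V_1]),\chi(G[V_2])\}\le\chi(G)$ follows because the restriction of a proper coloring of $G$ to the vertex set $V_i$ is a proper coloring of $G[V_i]$, so $\chi(G[V_i])\le\chi(G)$. For the upper bound, I would take an optimal proper coloring $c_1$ of $G[V_1]$ using colors $\{1,\dots,\chi(G[V_1])\}$ and an optimal proper coloring $c_2$ of $G[V_2]$ using a \emph{disjoint} palette $\{\chi(G[V_1])+1,\dots,\chi(G[V_1])+\chi(G[V_2])\}$, and define a coloring $c$ of $G$ by $c(v)=c_1(v)$ for $v\in V_1$ and $c(v)=c_2(v)$ for $v\in V_2$. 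Any edge of $G$ either lies inside $G[V_1]$, lies inside $G[V_2]$, or is a bridge edge between $V_1$ and $V_2$; in the first two cases its endpoints get different colors by properness of $c_1$ or $c_2$, and in the third case its endpoints get colors from disjoint palettes, hence different colors. This shows $c$ is a proper coloring of $G$ with $\chi(G[V_1])+\chi(G[V_2])$ colors, giving the bound.

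In fact this observation uses nothing about 2SIG structure beyond the bipartition $V(G)=V_1\sqcup V_2$ of the vertex set — the perfectness of interval graphs is not even needed for the inequalities themselves (it is only relevant for the remark preceding the observation, which notes $\chi(G[V_i])=\omega(G[V_i])$). I do not anticipate any genuine obstacle here; the only point requiring a line of care is ensuring the two color palettes in part $(ii)$ are chosen disjoint, and observing that every non-bridge edge stays within one part while every bridge edge crosses between the parts, so no edge can have both endpoints receiving the same color.
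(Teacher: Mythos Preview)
Your proposal is correct; the paper itself gives no explicit proof of this observation, treating it as immediate from the fact that $G$ is obtained by adding edges between two interval graphs, and your argument is exactly the standard justification one would supply. You are also right that the perfectness of $G[V_i]$ is irrelevant to the inequalities themselves and only serves the subsequent corollary.
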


Let $H$ be a 2SIG with no bridge edges. Then for $H$ both  lower bounds of Observation~\ref{uplow} are tight. 
Now note that even a complete graph with any vertex partition admits a 2SIG representation. In that case, both the 
upper bounds of Observation~\ref{uplow} are tight.   
As the clique number and the chromatic number of an interval graph can be computed in linear time, given a 2SIG with a vertex partition, the lower and upper 
bounds of Observation~\ref{uplow} can be obtained in linear time as well.
As any induced subgraph of a 2SIG is again a 2SIG we have the following result as a direct corollary of the above theorem.

\begin{corollary}
Given any 2SIG graph $G$ we have $\chi(H) \leq 2\omega(H)$ for all induced subgraph $H$ of $G$.
\end{corollary}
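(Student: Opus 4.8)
The plan is to derive the statement directly from Observation~\ref{uplow} together with two facts already established in the excerpt: that every induced subgraph of a 2SIG is again a 2SIG, and that the two sides $G[V_1]$, $G[V_2]$ of any 2-stab representation are interval graphs and hence perfect.

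First I would fix an arbitrary induced subgraph $H$ of $G$. Since the class of 2SIG graphs is closed under taking induced subgraphs, $H$ is itself a 2SIG; choose a 2-stab representation of $H$ and let $V(H) = W_1 \sqcup W_2$ be the associated lower/upper vertex partition. Applying part $(ii)$ of Observation~\ref{uplow} to this partition gives $\chi(H) \leq \chi(H[W_1]) + \chi(H[W_2])$.

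Next I would invoke perfectness: as noted in the preliminaries, $H[W_1]$ and $H[W_2]$ are interval graphs (their rectangles project to intervals on the $X$-axis), so $\chi(H[W_i]) = \omega(H[W_i])$ for $i \in \{1,2\}$. Combining this with part $(i)$ of Observation~\ref{uplow}, which yields $\omega(H[W_i]) \leq \omega(H)$, we obtain
\[
\chi(H) \;\leq\; \omega(H[W_1]) + \omega(H[W_2]) \;\leq\; 2\,\omega(H),
\]
which is exactly the claimed bound. Since $H$ was an arbitrary induced subgraph of $G$, the corollary follows.

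Honestly, there is no real obstacle here: the statement is a one-line consequence of Observation~\ref{uplow} and the closure of 2SIG under induced subgraphs. The only point that deserves a word of care is that the vertex partition $W_1 \sqcup W_2$ must be taken from an actual 2-stab representation of $H$ (rather than the restriction of a partition of $G$, which need not be a valid partition for $H$ in general) — but since $H$ is a 2SIG such a representation exists, so this causes no difficulty.
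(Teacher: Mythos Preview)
Your argument is correct and matches the paper's proof essentially line for line: both invoke closure of 2SIG under induced subgraphs, apply Observation~\ref{uplow}(ii) to the partition, use perfectness of the interval sides to replace $\chi$ by $\omega$, and bound each $\omega(H[W_i])$ by $\omega(H)$. Your closing caveat is unnecessary, though --- restricting a 2-stab representation of $G$ to the rectangles of $V(H)$ already gives a valid 2-stab representation of $H$, so the inherited partition works fine.
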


\begin{proof}
Let $G$ be a 2SIG with a vertex partition $V(G) = V_1 \sqcup V_2$. As $G[V_i]$ is an interval graph we have 
$\omega(G[V_i]) = \chi(G[V_i])$ for all $i \in \{1,2\}$. Then by Observation~\ref{uplow} we have 

\begin{align}\nonumber
\chi(G) &\leq \chi(G[V_1]) + \chi(G[V_2]) = \omega(G[V_1]) + \omega(G[V_2]) \nonumber \\
&\leq 2max\{\omega(G[V_1]), \omega(G[V_2])\} \leq 2\omega(G).
 \nonumber
\end{align} 
This completes the proof.  
\end{proof}

So in particular 2SIG graphs are $\chi$-bounded which is not surprising as  Gy\'arf\'as~\cite{gyarfas} showed that all boxicity 2 graphs are $\chi$-bounded 
by a quadratic function.  
We showed that 2SIGs are, in fact,  $\chi$-bounded by a linear function. It is known that square intersection graphs are $\chi$-bounded 
by a linear function~\cite{kostochka}.

Now we focus on some of the subclasses of 2SIG. First, we show that 2SIIG graphs are $\chi$-bounded by a better function.

\begin{observation}\label{i2sigcol}
 Let $G = (V_1 \sqcup V_2, E)$ be a 2SIIG graph with a given vertex partition. 
 Then $\omega(G) \leq \chi(G) \leq \omega(G)+1$. 
 
 Moreover, we can enumerate all the maximal cliques and hence, can compute the clique number $\omega(G) $ of $G$ in $O(|V|+|E|)$ time, where $|V|$ is the number of vertices in $G$.
\end{observation}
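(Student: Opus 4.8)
## Proof proposal

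The plan is to exploit the structure of a 2SIIG representation: here $G[V_2]$ is an independent set, so every edge of $G$ is either an edge of the interval graph $G[V_1]$ or a bridge edge from $V_1$ to $V_2$. For the chromatic bound, I would first color $G[V_1]$ optimally with $\omega(G[V_1]) \le \omega(G)$ colors, which is possible since $G[V_1]$ is an interval graph and hence perfect. The vertices of $V_2$ are pairwise non-adjacent, so a priori one extra color suffices to handle all of them at once; this already gives $\chi(G) \le \omega(G) + 1$, and the lower bound $\chi(G) \ge \omega(G)$ is trivial. So the inequality part is short; the substance is the algorithmic claim.

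For the algorithm, the key observation is to understand what a maximal clique of $G$ looks like. Since $V_2$ is independent, any clique $K$ contains at most one vertex of $V_2$. If $K \subseteq V_1$, it is just a clique of the interval graph $G[V_1]$. If $K = \{u\} \cup K'$ with $u \in V_2$ and $K' \subseteq V_1$, then $K'$ is a clique of $G[V_1]$ all of whose vertices are bridge-neighbors of $u$, i.e. $K' \subseteq N_B(u)$; moreover the rectangles of $K'$ all cross the lower stab line and all meet the unit-height rectangle $r_u$, which pins them down to a common vertical band, so $K'$ is a clique in the interval graph induced on $N_B(u)$ (again an interval graph). So the plan is: for each $u \in V_2$, compute $\omega(G[N_B(u)])$ in linear time in the size of that induced subgraph using the standard interval-graph sweep (this needs an interval representation of $G[V_1]$, obtainable in $O(|V|+|E|)$ by interval graph recognition), and separately compute $\omega(G[V_1])$; then $\omega(G) = \max\bigl(\omega(G[V_1]),\ 1 + \max_{u \in V_2} \omega(G[N_B(u)])\bigr)$. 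Enumerating the maximal cliques is analogous: the maximal cliques of $G$ are the maximal cliques of $G[V_1]$ not dominated by any $u$-augmented clique, together with, for each $u \in V_2$, the cliques $\{u\}\cup K'$ where $K'$ ranges over the maximal cliques of $G[N_B(u)]$.

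The main obstacle — and the point that needs care — is the running-time accounting. Naively running a fresh interval-graph algorithm on each $G[N_B(u)]$ costs $\sum_{u \in V_2}\bigl(|N_B(u)| + \|G[N_B(u)]\|\bigr)$, and one must argue this is $O(|V|+|E|)$: the vertex contribution $\sum_u |N_B(u)|$ is exactly the number of bridge edges, hence $\le |E|$, and each edge of $G[V_1]$ is counted in $G[N_B(u)]$ only when both its endpoints are bridge-neighbors of the same $u$, which could in principle blow up. To avoid this I would not build $G[N_B(u)]$ explicitly; instead I would sort all left endpoints of the $V_1$-intervals once, and for each $u$ use the left-to-right sweep restricted to $N_B(u)$, charging work to bridge edges rather than to $G[V_1]$-edges — a single global sweep can be organized so each bridge edge and each vertex is touched a constant number of times. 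Care is also needed that the input is given as a 2SIIG representation (or that one can be produced), since otherwise even recognizing the class is a separate matter; I would state the complexity relative to being handed such a representation, consistent with the convention fixed earlier in the paper that a 2SIG comes with its lower/upper partition from a representation.
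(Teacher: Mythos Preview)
Your approach is essentially the same as the paper's: use perfectness of $G[V_1]$ plus one extra color for the independent set $V_2$ to get the chromatic bound, and split the clique computation into the maximal cliques of $G[V_1]$ together with, for each $u\in V_2$, the cliques of the interval graph $G[N_B(u)]$ augmented by $u$. If anything you are more careful than the paper about the $O(|V|+|E|)$ accounting---the paper simply asserts that the maximal cliques containing a fixed $v\in V_2$ can be listed in $O(\deg v)$ time and sums over $V_2$, without discussing the edge-charging issue you raise.
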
 

\begin{proof}
Note that, the lower partition $V_1$ induces an interval graph and the upper partition $V_2$ induces an independent set. 
Hence $\omega(G[V_1]) = \chi(G[V_1])$ while $\omega(G[V_2]) = \chi(G[V_2]) = 1$. 
Also, Observation~\ref{uplow} implies   $\omega(G[V_1]) \leq \omega(G) \leq \omega(G[V_1]) +1$ and  
$\chi(G[V_1]) \leq \chi(G) \leq \chi(G[V_1]) +1$ which implies this result.  

We know that it is possible to enumerate all the maximal cliques and to compute the clique number of $G[V_1]$ in linear time~\cite{golumbic}. 
For a vertex $v$ in $G[V_2]$, $N_B(v)$ induces an interval graph. The maximal cliques containing $v$ can be enumerated in $O(d)$ time, where $d$ is the degree of the vertex $v$. So, in $O(|V|+|E|)$ time we can compute the clique number of the graph. 
Hence we are done. 
\end{proof}

Note that, if the intersection representation of a boxicity 2 graph, $G=(V,E)$, is given then it is possible to compute the clique number of $G$ in $O(|V|log|V|+|V| \cdot K)$ time, where $K$ is the size of the maximum clique~\cite{snandy}.


Here we provide a 
quadratic time solution for the same problem for 2SIIG, a subclass of boxicity 2 graphs, but we do not require the intersection model as our input in this case. It is enough if the vertex partition of the graph is provied. 
We can prove a similar result for yet another subclass of boxicity 2 graphs, the 2SUIG graphs. The proof is more involved. 

\begin{theorem}\label{allmaxclique}
For any 2SUIG graph $G$ we can enumerate all the maximal cliques and hence, can compute the clique number $\omega(G) $ in polynomial time.
\end{theorem}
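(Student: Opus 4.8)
The plan is to decompose every maximal clique of $G$ according to how it meets the two interval graphs $G[V_1]$ and $G[V_2]$, reducing the task to a bounded number of maximal–biclique enumerations, each on a structurally tame bipartite graph. I may freely use a $2$SUIG representation of $G$; in fact only the vertex partition $V(G)=V_1\sqcup V_2$ is needed, as I indicate at the end.

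First I would record that $G[V_1]$ and $G[V_2]$ are unit interval graphs, so one lists their maximal cliques in linear time, each being a run of consecutive vertices in the left-endpoint order, and there are at most $|V_1|$ and $|V_2|$ of them. A maximal clique of $G$ contained in $V_1$ is just a maximal clique of $G[V_1]$ with no common bridge neighbour; these, and the symmetric ones in $V_2$, are found by direct inspection. So it remains to enumerate the mixed maximal cliques $K=A\sqcup B$ with $A\subseteq V_1$ and $B\subseteq V_2$ both nonempty. Extending $A$ to a maximal clique $C_1$ of $G[V_1]$ and $B$ to a maximal clique $C_2$ of $G[V_2]$, one checks that $A\sqcup B$ must be a maximal biclique of the bipartite graph $H(C_1,C_2)$ formed by the bridge edges between $C_1$ and $C_2$: a vertex of $C_1$ (resp.\ $C_2$) adjacent to the whole of $B$ (resp.\ $A$) could be added to $K$, since $C_i$ is a clique, contradicting maximality. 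Conversely every maximal biclique of $H(C_1,C_2)$ spans a clique of $G$. Hence the algorithm: for each of the at most $|V_1|\cdot|V_2|$ pairs $(C_1,C_2)$, list all maximal bicliques of $H(C_1,C_2)$, retain those that are maximal cliques of $G$, and read off $\omega(G)$ as the size of the largest clique found.

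The heart of the proof is to show that $H(C_1,C_2)$ has only polynomially many maximal bicliques and that they can be listed efficiently; here I would use the geometry. The unit squares of $C_1$ pairwise intersect and so meet a common vertical line $x=x_1$, and those of $C_2$ meet a line $x=x_2$; assume $x_1\le x_2$ (otherwise swap $C_1,C_2$). Each $a\in C_1$ is then described by the left endpoint $\ell_a\in[x_1-1,x_1]$ of its $x$-interval and by the bottom height $h_a\in[0,1]$ of its $y$-interval (which meets $y=1$), and each $b\in C_2$ by $\ell_b\in[x_2-1,x_2]$ and a height $g_b\in[\epsilon,1+\epsilon]$ coming from its $y$-interval meeting $y=2+\epsilon$. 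Because $\epsilon<1$ the squares of $a$ and $b$ meet vertically iff $h_a\ge g_b$, and because $x_1\le x_2$ and all intervals have unit length they meet horizontally iff $\ell_a\ge\ell_b-1$. So, mapping $a\mapsto p_a:=(\ell_a,h_a)$ and $b\mapsto q_b:=(\ell_b-1,g_b)$, the bridge edge $ab$ is present iff $p_a$ dominates $q_b$ coordinatewise; thus $H(C_1,C_2)$ is a two-dimensional dominance bigraph. Now $A\times B$ is a biclique exactly when the coordinatewise infimum of $\{p_a:a\in A\}$ dominates the coordinatewise supremum of $\{q_b:b\in B\}$, so a maximal biclique is determined by one point $p$ via $A=\{a:p_a\ge p\}$ followed by $B=\{b:q_b\le\inf_{a\in A}p_a\}$. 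Since $\{a:p_a\ge p\}$ takes at most $(n+1)^2$ values as $p$ varies (with $n=|V(G)|$), $H(C_1,C_2)$ has $O(n^2)$ maximal bicliques, and sweeping over the $O(n^2)$ candidate points lists them all in polynomial time. Summing over the $O(n^2)$ pairs $(C_1,C_2)$ yields a polynomial-time enumeration of all maximal cliques of $G$, hence of $\omega(G)$. (Only the existence of the dominance representation is used, not its construction, so the geometric input can be replaced by the fact that $H(C_1,C_2)$ always admits one — equivalently, that its associated height-two poset has order dimension at most two, which is checkable in polynomial time — and the vertex partition of $G$ alone then suffices.)

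I expect the main obstacle to be that middle step: showing that the bridge adjacency inside a single pair $(C_1,C_2)$ genuinely collapses to a conjunction of two one-dimensional threshold conditions — one forced by unit height together with $\epsilon<1$, the other by unit width together with the left/right placement of $C_1$ relative to $C_2$ — and hence to plain two-dimensional dominance. Getting the orientation of $C_1$ versus $C_2$ right, and handling cleanly the boundary cases (empty $A$ or $B$, and cliques lying in several maximal cliques of $G[V_i]$), is where the care goes; the rest is bookkeeping together with standard interval-graph and dominance-order facts.
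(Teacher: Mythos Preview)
Your argument is correct and shares its core idea with the paper's: both reduce the enumeration of mixed maximal cliques to listing maximal bicliques of the bipartite bridge graph between a clique of $G[V_1]$ and a clique of $G[V_2]$. The differences lie in how these clique pairs are produced and in how the biclique count is bounded. The paper assigns integer labels to the bridge edges via the transitive orientations of $G[V_1]^c$ and $G[V_2]^c$ (Lemmas~\ref{orient}--\ref{noclique}), argues that within each label class the endpoints on either side already form a clique, and then cites an external result for polynomial biclique enumeration. You instead range over all $O(n^2)$ pairs $(C_1,C_2)$ of maximal cliques of the two unit interval graphs and give a self-contained geometric proof that each $H(C_1,C_2)$ is a two-dimensional dominance bigraph with $O(n^2)$ maximal bicliques. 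Your route is more elementary---no labeling machinery, no external citation for the biclique bound---while the paper's edge-labeling yields a leaner decomposition of the bridge edges at the cost of more setup. Two cosmetic remarks: ``swap $C_1,C_2$'' is not literally available since the two sides are pinned to different stab lines, but the symmetric case $x_1>x_2$ is handled by reversing the horizontal inequality and still gives a dominance representation; and your range $g_b\in[\epsilon,1+\epsilon]$ is correct once $g_b$ is read as the bottom height of the upper square minus one, which is exactly what makes the adjacency condition $h_a\ge g_b$ come out right.
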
 

To prove the above result we need to prove the following lemmas.

\begin{lemma}\label{orient}
Let $G = (V_1 \sqcup V_2, E)$ be a 2SUIG graph with a given partition. 
Then there exist transitive orientations $\overrightarrow{G[V_1]^c}$ and $\overrightarrow{G[V_2]^c}$ such that 
for every pair of bridge edges $u_1v_1$ and $u_2v_2$  with $u_1u_2, v_1v_2 \notin E(G)$ 
we have the two arcs 
$(u_1,u_2)$ and $(v_1,v_2)$ in the orientations. 
Moreover, such orientations can be found in polynomial time.
\end{lemma}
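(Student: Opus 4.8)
The plan is to read the two orientations directly off a geometric model and then verify the two required properties by a short interval‑arithmetic computation. Fix any 2SUIG representation of $G$ realizing the partition $V_1\sqcup V_2$, and for a vertex $w$ write $I(w)=[\ell(w),r(w)]$ (with $r(w)=\ell(w)+1$) for the projection of its unit square onto the $X$‑axis. Orient $G[V_i]^c$, for $i\in\{1,2\}$, by directing a non‑edge $xy$ of $G[V_i]$ from $x$ to $y$ whenever $I(x)$ lies entirely to the left of $I(y)$, i.e. $r(x)<\ell(y)$. First I would check this is legitimate: any two vertices of $V_i$ have squares meeting the same stab line, hence overlapping $Y$‑projections, so $x,y\in V_i$ are non‑adjacent in $G$ precisely when $I(x)\cap I(y)=\emptyset$, in which case exactly one of $x,y$ lies entirely left of the other; since ``entirely to the left of'' is irreflexive and transitive, the resulting orientation $\overrightarrow{G[V_i]^c}$ is a transitive orientation of $G[V_i]^c$.

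The core of the lemma is the bridge‑edge condition, which comes out of a single chain of inequalities. Let $u_1v_1$ and $u_2v_2$ be bridge edges, $u_1,u_2\in V_1$, $v_1,v_2\in V_2$, all distinct, with $u_1u_2,v_1v_2\notin E(G)$; after swapping the two bridge edges we may assume $r(u_1)<\ell(u_2)$, so $(u_1,u_2)$ is an arc of $\overrightarrow{G[V_1]^c}$, and I claim $(v_1,v_2)$ is then an arc of $\overrightarrow{G[V_2]^c}$. Since $v_1v_2\notin E(G)$, the intervals $I(v_1)$ and $I(v_2)$ are disjoint, so it is enough to exclude the possibility $r(v_2)<\ell(v_1)$. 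But $u_1v_1\in E(G)$ forces $I(u_1)\cap I(v_1)\neq\emptyset$, hence $\ell(v_1)\le r(u_1)$, and $u_2v_2\in E(G)$ forces $\ell(u_2)\le r(v_2)$; combining these we get $\ell(u_2)\le r(v_2)<\ell(v_1)\le r(u_1)<\ell(u_2)$, a contradiction. Therefore $r(v_1)<\ell(v_2)$, i.e. $(v_1,v_2)$ is an arc, which is exactly the asserted property; as a single fixed pair of orientations does the job, it works for all pairs of bridge edges simultaneously.

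It then remains to make the construction algorithmic, since the input supplies only the partition, not the model (the unit‑square hypothesis is needed essentially only here). I would use that $G[V_1]$ and $G[V_2]$ are unit interval graphs, so unit interval representations of each can be built in linear time; such a representation is determined only up to reversing each connected component (and reordering the components), and $E_B=E(G)\cap(V_1\times V_2)$ is already known. The bridge condition then turns into a family of binary ``same‑orientation'' constraints relating the order of a non‑adjacent pair of $G[V_1]$ to that of a non‑adjacent pair of $G[V_2]$, which can be resolved by union--find / $2$‑SAT‑style propagation in polynomial time; the inequality computation of the previous paragraph, applied to an actual 2SUIG model of $G$, certifies that these constraints are simultaneously satisfiable, so the propagation terminates without conflict. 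I expect this last bookkeeping step — producing the orientations without being handed a geometric representation — to be the only delicate point, since the existential statement itself is immediate from the three‑line inequality above.
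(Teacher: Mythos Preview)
Your proposal is correct and follows essentially the same approach as the paper: both read the orientations off a 2SUIG representation and argue geometrically that a ``crossing'' pair of bridge edges (arcs going opposite ways) is impossible. The paper phrases this topologically --- the intersecting squares of $u_1$ and $v_1$ separate the strip between the stab lines, forcing $u_2,v_2$ to the same side --- while you carry out the equivalent interval‑arithmetic chain $\ell(u_2)\le r(v_2)<\ell(v_1)\le r(u_1)<\ell(u_2)$, which is the cleaner of the two. For the algorithmic clause the paper simply invokes that the transitive orientation of the complement of a connected unit interval graph is unique up to reversal~\cite{Ibarra20091737}; your union--find/$2$‑SAT propagation over component reversals is the natural way to flesh out exactly that remark, so there is no real divergence.
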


\begin{proof}
Take $\overrightarrow{G[V_1]^c}$ and $\overrightarrow{G[V_2]^c}$ such that the statement does not hold. The rectangles corresponding to $u_1$ and $v_1$ along with their intersection divide the region between the axis parallel lines into two disjoint parts. Hence, the intersection between the rectangles corresponding to $u_2$ and $v_2$, cannot be created without any intersection between $u_2$ and $u_1$ or between $v_2$ and $v_1$. This contradicts the premise of the lemma. 
Since, the given graph has representation with the given partition, there must exist $\overrightarrow{G[V_1]^c}$ and $\overrightarrow{G[V_2]^c}$ such that the lemma holds.

Moreover, it is possible to compute such an orientation of $\overrightarrow{G[V_1]^c}$ and $\overrightarrow{G[V_2]^c}$ in $O(|V_1|+|V_2|+|E|)$ time since the transitive orientation of the complement of a connected unit interval graph is unique up to reversal~\cite{Ibarra20091737}.  
\end{proof}

So, given a 2SUIG $G = (V_1 \sqcup V_2, E)$  with a partition we can fix transitive orientations 
$\overrightarrow{G[V_1]^c}$ and $\overrightarrow{G[V_2]^c}$ as in Lemma~\ref{orient}. 
Now given a bridge vertex $v \in V_i$ a vertex $v' \in V_i$ is its 
preceeding bridge vertex if each directed path from $v'$ to $v$ does not go through any other bridge vertex. 
The set 
 of all preceeding bridge vertices of $v$ is denoted by $PBV(v)$.

 Now we will assign integer labels to the bridge edges of $G$. 
 Let $uv \in E_B$ and let
$\mathscr{B}(uv)$ be the set of all bridge edges with  one vertex incident to it 
lying completely to the left (that is, strictly less with respect to the transitive orientation)  of $u$ or $v$.
We assign the integer label $[e]$ to each edge $e= uv \in E_B$ inductively as follows:

\[
    [e] = 
\begin{cases}
    0 & \text{ if PBV(u) = PBV(v)= $\phi$},\\
    i+1 & \text{ otherwise, where $i=max \{ [e']|e' \in \mathscr{B}(e) \}$ }.
\end{cases}
\]

Let $E_{B}^{i}$ be the bridge edges with label $i$.

\begin{lemma}\label{biclique}
In a $2SUIG$ graph the maximal cliques induced by the vertices incident to edges of $E_B^i$ can be enumerated in polynomial time.
\end{lemma}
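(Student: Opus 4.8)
The plan is to show that the bridge edges carrying a fixed label $i$ form a very restricted subconfiguration, so that the cliques supported on their endpoints can be read off directly from the two transitive orientations fixed in Lemma~\ref{orient}. First I would unpack what a clique $K$ on vertices incident to $E_B^i$ looks like: writing $K_1 = K \cap V_1$ and $K_2 = K \cap V_2$, both $K_1$ and $K_2$ are cliques of unit interval graphs (so each is a set of pairwise-intersecting intervals, totally ordered in a natural ``staircase'' way), every vertex of $K_1$ is joined to every vertex of $K_2$ by a bridge edge, and each such bridge edge lies in $E_B^i$. The key structural point is that the label $[e]$ was defined precisely to count, via $\mathscr{B}(e)$ and the $PBV$ relation, how many ``layers'' of bridge edges lie to the left of $e$ along the unique (up to reversal) transitive orientation of the complement of each $G[V_j]$; so two bridge edges with the same label $i$ cannot be separated by a third bridge edge, which forces the endpoints in $V_1$ of all label-$i$ bridge edges to lie in a bounded-length window of the order on $V_1$, and likewise in $V_2$.

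Next I would make that ``bounded window'' precise. Because $G$ is a 2SUIG with the fixed partition, the geometric picture from the proof of Lemma~\ref{orient} applies: if $u_1v_1$ and $u_2v_2$ are bridge edges with $u_1u_2,v_1v_2 \notin E$, then the arcs $(u_1,u_2)$ and $(v_1,v_2)$ are \emph{both} present in the chosen orientations, i.e. the two bridge edges are ``aligned.'' Combining alignment with the labelling recurrence, I expect to prove: (a) within a single label class $E_B^i$, the set of $V_1$-endpoints is a set of pairwise-intersecting intervals, hence a clique in $G[V_1]$, and similarly for the $V_2$-endpoints; and (b) the bipartite graph of bridge edges between these two cliques is a ``staircase'' (an interval bigraph with a consecutive structure) rather than an arbitrary bipartite graph. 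Given (a) and (b), any maximal clique on the vertices incident to $E_B^i$ is obtained by choosing a contiguous block $A$ of the $V_1$-order and a contiguous block $B$ of the $V_2$-order such that every pair in $A \times B$ is a bridge edge and $A$, $B$ are each cliques — and there are only $O(|V_1|\cdot|V_2|)$ such maximal ``combinatorial rectangles'' since each is determined by its leftmost and rightmost elements on each side. Enumerating them is then a matter of scanning the two orders together with the adjacency list of $E_B^i$, which is polynomial.

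The main obstacle, I expect, is step (b): controlling the bipartite adjacency between the two endpoint-cliques of a label class. It is not \emph{a priori} obvious that all of $A \times B$ being bridge edges is forced; one has to rule out configurations where, say, $u_1$ sees $v_1$ and $v_2$ but $u_2$ sees only $v_2$, in a way that would still be geometrically realizable and still keep all involved bridge edges at label $i$. I would handle this by going back to the unit-square (2SUIG) representation: the unit-height constraint means that once two rectangles in $V_1$ both cross, say, the lower stab line and both meet a common rectangle of $V_2$ crossing the upper stab line, the horizontal overlaps are so constrained that the ``missing'' bridge edge would create either a bridge triangle pattern forcing a smaller label, or a third intervening bridge edge, contradicting that all four edges sit in the same class $E_B^i$. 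Making this dichotomy airtight — essentially a careful case analysis on the relative positions of the four unit rectangles and the two stab lines, using the alignment property of Lemma~\ref{orient} — is where the real work lies; the rest is bookkeeping. If (b) turns out to be only ``almost'' true (a constant number of exceptional adjacency patterns per class), that is still fine for the statement, since it only multiplies the number of maximal cliques per label by a constant.
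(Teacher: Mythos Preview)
Your high-level strategy matches the paper's: show that within a fixed label class $E_B^i$ the $V_1$-endpoints and the $V_2$-endpoints each form a clique (your (a)), so that the maximal cliques of $G'$ are exactly the maximal bicliques of the bipartite graph $G''$ on the bridge edges, and then enumerate those. The paper does precisely this, but far more tersely than you: it simply asserts (a) in one sentence, observes that $G''$ is bipartite, and then cites~\cite{Alexe} for polynomial biclique enumeration together with a parenthetical claim that $G''$ has at most $O(|V(G'')|)$ maximal bicliques.

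Where you diverge is in trying to prove a full ``staircase'' (interval-bigraph) structure for $G''$ via a geometric case analysis on four unit squares. The paper does not do this at all; it black-boxes the biclique count. In fact your proposed (b) is exactly the content behind the paper's unproved parenthetical, so you are doing more work than the paper, not less. That said, you may be making (b) harder than necessary: since $G'$ is an induced subgraph of a rectangle intersection graph, Helly's theorem for axis-parallel boxes already bounds its maximal cliques polynomially, which is all the lemma requires. Your staircase argument would give a sharper $O(|V(G'')|)$ bound and the explicit structure you describe, but the lemma as stated does not need that precision. Note also that your claim (a) follows almost immediately from the labelling recurrence (if two same-side endpoints of label-$i$ bridge edges were non-adjacent, one edge would lie in $\mathscr{B}$ of the other and force a strictly larger label), so you need not route it through the geometry of Lemma~\ref{orient}.
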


\begin{proof}
Let $G=(V_1\sqcup V_2,E)$ be a $2SUIG$ graph with a given partition.
Let $G'$ be the subgraph induced  by the vertices incedent to the edges of $E_B^i$ for some fixed index $i$.. 
The vertices of $G'$ belonging to the same stab line create a clique (not necessarily maximal in $G$). 
Consider the subgraph $G'' \subseteq G'$ containing only edges of $E_B^i$. 
Note that, this graph is a bipartite graph. 
Any maximal bipartite clique in $G''$ creates a maximal clique in $G$. 
All maximal bipartite cliques of $G''$ can be enumerated in polynomial time~\cite{Alexe} 
(since we can have at most $O(|V(G'')|)$ maximal bipartite cliques). 
Therefore, the maximal cliques created by the union of the endpoints of $E_B^i$ can be evaluated in polynomial time. 
\end{proof}

Now we will show that it is not possible to have bridge edges with different labels in the same maximal clique of a 2SUIG $G$.

\begin{lemma}\label{noclique}
Bridge edges with different labels are not part of the same maximal clique in a $2SUIG$.
\end{lemma}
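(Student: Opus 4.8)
The plan is to show that two bridge edges belonging to a common maximal clique must carry the same label, by analyzing how the label of a bridge edge is forced by the geometry of the 2SUIG representation and the transitive orientations fixed in Lemma~\ref{orient}. First I would set up the situation: suppose $e = u_1v_1$ and $f = u_2v_2$ are bridge edges lying in the same maximal clique $Q$ of a 2SUIG $G = (V_1 \sqcup V_2, E)$, with $u_i \in V_1$ and $v_i \in V_2$. Since all four vertices lie in $Q$, every pair among $u_1,u_2,v_1,v_2$ is adjacent, so in particular $u_1u_2 \in E(G)$ and $v_1v_2 \in E(G)$; that is, the two bridge edges are ``parallel'' in the sense that their endpoints on each stab line are themselves adjacent. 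I would then recall the label definition: $[e]$ is $0$ when $PBV(u_1) = PBV(v_1) = \phi$ and otherwise is $1 + \max\{[e'] : e' \in \mathscr{B}(e)\}$, where $\mathscr{B}(e)$ collects the bridge edges having an endpoint strictly to the left (with respect to the fixed transitive orientations) of $u_1$ or of $v_1$.

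The key step is to prove that if $u_1u_2, v_1v_2 \in E(G)$, then $\mathscr{B}(e) = \mathscr{B}(f)$, and moreover $PBV(u_1) = \phi \iff PBV(u_2) = \phi$ on the $V_1$ side (and symmetrically on $V_2$), so that the two inductive definitions produce the same integer. For this I would use the uniqueness (up to reversal) of the transitive orientation of the complement of a connected unit interval graph together with the structure imposed by Lemma~\ref{orient}: since $u_1$ and $u_2$ are adjacent in $G[V_1]$, neither lies strictly to the left of the other, so they occupy the ``same position'' relative to every vertex they are non-adjacent to — any $w \in V_1$ with $(w,u_1)$ an arc of $\overrightarrow{G[V_1]^c}$ also has $(w,u_2)$ an arc, and vice versa. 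Combined with the analogous statement for $v_1,v_2$ in $V_2$, this yields that a bridge edge $e'$ has an endpoint strictly left of $u_1$ or $v_1$ exactly when it has an endpoint strictly left of $u_2$ or $v_2$; hence $\mathscr{B}(e) = \mathscr{B}(f)$. The same argument shows $PBV$ on the relevant sides is simultaneously empty or nonempty. Then an induction on the label value finishes it: the base case ($\mathscr{B}(e) = \emptyset$) forces $[e] = [f] = 0$, and the inductive step gives $[e] = 1 + \max\{[e'] : e' \in \mathscr{B}(e)\} = 1 + \max\{[e'] : e' \in \mathscr{B}(f)\} = [f]$.

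Actually, to be careful, I would phrase the induction as being on the well-founded order $\mathscr{B}$ induces on $E_B$ (each $e' \in \mathscr{B}(e)$ has strictly smaller label, so this is legitimate), and I would verify that the ``strictly to the left'' relation is indeed transitive-orientation-consistent so that $e \notin \mathscr{B}(e)$ and the recursion is well-defined — points that are implicit in the label definition but worth stating. The conclusion is then immediate: bridge edges in a common maximal clique have equal labels, so two bridge edges with different labels cannot lie in the same maximal clique.

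The main obstacle I anticipate is the careful geometric/orientation bookkeeping in the key step: making precise the claim that adjacency of $u_1$ and $u_2$ in the unit interval graph $G[V_1]$ forces them to have the identical set of ``predecessors'' among non-neighbors under the (essentially unique) transitive orientation of $G[V_1]^c$, and checking that this predecessor-equality is exactly what is needed to equate $\mathscr{B}(e)$ with $\mathscr{B}(f)$ and the $PBV$ emptiness conditions. This requires being scrupulous about which vertex ($u$ or $v$, on which stab line) a given bridge edge in $\mathscr{B}$ is witnessed by, and about the fact that Lemma~\ref{orient} ties the two orientations together precisely for non-adjacent pairs of bridge endpoints. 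Once that lemma-level compatibility is nailed down, the induction itself is routine.
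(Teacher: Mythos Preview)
Your route differs substantially from the paper's. The paper gives a two-line direct argument: assuming $[e] < [e']$, it asserts that by the very definition of the labelling some endpoint of $e$ is not adjacent to some endpoint of $e'$, so the two bridge edges cannot sit in a common clique. You instead take the contrapositive and try to establish the much stronger intermediate claim that if $e=u_1v_1$ and $f=u_2v_2$ lie in a common clique then $\mathscr{B}(e)=\mathscr{B}(f)$, concluding $[e]=[f]$ by induction.

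There is a genuine gap in your key step. You assert that if $u_1$ and $u_2$ are adjacent in the unit interval graph $G[V_1]$, then any $w\in V_1$ with $(w,u_1)$ an arc of $\overrightarrow{G[V_1]^c}$ also has $(w,u_2)$ as an arc, and vice versa. This is false already for three unit intervals: take $w=[0,1]$, $u_2=[0.5,1.5]$, $u_1=[1.2,2.2]$. Then $u_1u_2\in E$, and $w$ is strictly to the left of $u_1$ in the complement orientation, but $wu_2\in E$, so $(w,u_2)$ is not an arc of the complement at all. Adjacent vertices in a unit interval graph need not share their sets of left non-neighbours. Consequently your deduction that $\mathscr{B}(e)=\mathscr{B}(f)$ does not go through as written: a bridge edge incident to such a $w$ can lie in $\mathscr{B}(e)$ (witnessed by $w$ strictly left of $u_1$) while having no witness placing it in $\mathscr{B}(f)$ on the $V_1$ side, and nothing you have argued rules out the same failure occurring simultaneously on the $V_2$ side. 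The paper does not pass through the identity $\mathscr{B}(e)=\mathscr{B}(f)$; it argues directly from the recursive definition of the labels.
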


\begin{proof}
Let $e,e'$ be two bridge edges and without loss of generality assume $[e] < [e']$. 
Then by definition at least one vertex incident to $e$ is not adjacent to one of the vertices incident to $e'$. Hence we are done. 
\end{proof}

Now we are ready to prove our main result. 

\medskip

\noindent  \textbf{\textit{Proof of Theorem~\ref{allmaxclique}:}} Let $G= (V_1\sqcup V_2,E)$ be a $2SUIG$ graph with a given partition.
We can enumarate all the maximal cliques of $G$ containing at least one bridge edge 
using Lemma~\ref{biclique} and Lemma~\ref{noclique} in polynomial time. 
The maximal cliques of $G[V_1]$ and $G[V_2]$ can be enumerated in polynomial time as they are unit interval graphs. 

\medskip

We could not provide a better $\chi$-bound function for 2SUIG graphs than the one in Observation~\ref{i2sigcol}. 
However, we can provide a better $\chi$-bound function for bridge triangle-free 2SUIG graphs.

\begin{theorem}\label{th trifreecol}
 Let $G$ be a bridge triangle free 2SUIG. Then $\omega(G) \leq \chi(G) \leq \omega(G)+1$. 
\end{theorem}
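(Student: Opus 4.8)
The plan is to bound $\chi(G)$ by producing an explicit proper colouring with $\omega(G)+1$ colours; the lower bound $\omega(G)\le\chi(G)$ is trivial. Fix a bridge-triangle-free 2SUIG representation of $G$ with lower and upper partitions $V_1$ and $V_2$, and recall from the proof of Proposition~\ref{2-int} that, since $G$ is bridge triangle free, the bridge edges $E_B$ induce a subgraph that is a disjoint union of paths. First I would colour the interval graph $G[V_1]$ optimally with colours $\{1,\dots,\omega(G[V_1])\}$, using the perfectness of interval graphs; by Observation~\ref{uplow}(i) we have $\omega(G[V_1])\le\omega(G)$, so this uses at most $\omega(G)$ colours. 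The remaining task is to colour $V_2$ so that the colouring stays proper on the edges inside $G[V_2]$ and on the bridge edges, without introducing more than one extra colour.

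The key structural observation I would exploit is that the bridge edges form paths, so each vertex of $V_2$ has at most two bridge neighbours in $V_1$, and those bridge neighbours, together with $v$, do not form a bridge triangle — so the two bridge neighbours of a given $v\in V_2$ receive (possibly) two forbidden colours among the $\omega(G)$ colours used on $V_1$. The idea is to process the components of the bridge-edge graph one path at a time and, along each such path, colour the $V_2$-vertices greedily: a $V_2$-vertex $v$ must avoid the colours of its (at most two) bridge neighbours in $V_1$ and the colours already assigned to its neighbours inside $G[V_2]$. I would argue that $G[V_2]$ is an interval graph of clique number at most $\omega(G)$, so an optimal ordering (e.g. by left endpoint of the intervals) lets each $V_2$-vertex see at most $\omega(G[V_2])-1\le\omega(G)-1$ earlier $V_2$-neighbours; adding the $\le 2$ forbidden bridge-neighbour colours would naively need $\omega(G)+2$ colours, so the heart of the argument is to save one. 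The saving comes from a clique-number accounting: if $v\in V_2$ actually has two bridge neighbours $u_1,u_2\in V_1$ with distinct colours, then $\{v,u_1\}$ and the $V_2$-clique through $v$ interact, and one shows that in the tightest case the relevant clique in $G$ already has size close to $\omega(G)$, forcing $\omega(G[V_2])$ to be strictly smaller — so not all $\omega(G)$ colours of $V_1$ can simultaneously be "used up" near $v$. Concretely, I would split into the cases where $v$ has $0$, $1$, or $2$ bridge neighbours and in each case bound (colours forbidden by $G[V_2]$-neighbours) $+$ (colours forbidden by bridge neighbours) by $\omega(G)$, using Observation~\ref{uplow}(i) applied to the appropriate induced subclique containing $v$, its bridge neighbours, and a maximum clique of $G[V_2]$ through $v$.

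The main obstacle I anticipate is exactly this last accounting step: showing that the two sources of forbidden colours overlap enough, or are small enough, to fit into $\omega(G)+1$ total — equivalently, ruling out a configuration where $v\in V_2$ lies in an $\omega(G)$-clique of $G[V_2]$ \emph{and} has two bridge neighbours coloured outside that clique's colour set. Handling this cleanly will likely require the transitive-orientation machinery of Lemma~\ref{orient}: the orientations $\overrightarrow{G[V_1]^c}$ and $\overrightarrow{G[V_2]^c}$ let me order the bridge edges along each path consistently with the geometry, so that when I walk along a bridge path the two bridge neighbours of consecutive $V_2$-vertices "shift" monotonically and cannot both be far from $v$'s $V_2$-clique. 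An alternative route, if the direct colouring is too delicate, is to 2-colour the bridge-path components as a bipartite graph, use that to partition $V_2=V_2'\sqcup V_2''$ where each part's vertices have bridge neighbours only on "one side", and colour $V_2'$ reusing $V_1$'s palette while $V_2''$ gets the one new colour class together with a recolouring argument. I would present the greedy version as the primary proof and fall back on the bipartite-split version only if a troublesome case survives.
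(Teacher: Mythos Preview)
Your plan is different from the paper's and the gap you yourself flag is not closed by the fixes you sketch. The paper proceeds by induction on $\omega(G)$: the base case $\omega(G)=2$ is the triangle-free case, handled via Lemma~\ref{3col} and Lemma~\ref{trifreecol} (a triangle-free 2SUIG is $3$-colourable, proved by induction on the number of bridge edges); for the inductive step with $\omega(G)=m+1$ one deletes a maximal independent set to obtain a bridge-triangle-free 2SUIG $G'$ with $\omega(G')\le m$, colours $G'$ with $m+1$ colours by the inductive hypothesis, and assigns one fresh colour to the removed set. There is no ``colour $V_1$ first, then extend greedily to $V_2$'' argument at all.

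Your accounting step, as stated, does not go through. Take $v\in V_2$ lying in a clique $C\subseteq V_2$ with $|C|=\omega(G)$ and having two bridge neighbours $u_1,u_2\in V_1$. Bridge-triangle-freeness forces $u_1u_2\notin E$ and, crucially, forces each $u_i$ to be adjacent only to $v$ among the vertices of $C$ (otherwise $u_i$ together with two adjacent $V_2$-vertices would form a bridge triangle). Hence $C\cup\{u_i\}$ is never a clique, Observation~\ref{uplow}(i) yields no contradiction, and nothing prevents $\omega(G[V_2])=\omega(G)$ simultaneously with two bridge neighbours at $v$; this configuration is realisable by placing the unit squares of $C\setminus\{v\}$ high enough on the upper stab line that they miss $u_1,u_2$ while $v$'s square sits low. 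At such a $v$ your greedy step faces $(\omega(G)-1)+2=\omega(G)+1$ forbidden colours, one too many. The orientation of Lemma~\ref{orient} governs the left--right order of bridge edges but says nothing about which $V_2$-cliques a bridge endpoint enters, so it does not rescue the count. Your bipartite-split fallback also breaks: bridge paths alternate between $V_1$ and $V_2$, so their natural $2$-colouring is exactly the $V_1$/$V_2$ split and does not partition $V_2$ further; and for an arbitrary split $V_2=V_2'\sqcup V_2''$ there is no reason $V_2''$ is independent, so ``give $V_2''$ the one new colour'' is not a legal colouring step.
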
 

However, we will need to prove some lemmas before proving this result.

\begin{lemma}\label{3col}
The bridge vertices of a triangle free 2SUIG graph can be coloured using 2 colors.
\end{lemma}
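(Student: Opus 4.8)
The plan is to prove the equivalent statement that the graph $H:=G[V_B]$ induced by the bridge vertices is bipartite, i.e.\ contains no odd cycle. Fix a $2SUIG$ representation of the triangle-free graph $G$ with vertex partition $V=V_1\sqcup V_2$; for a vertex $w$ let $I(w)$ be the $X$-projection of its unit square and let $y(w)$ be the lower $Y$-coordinate of its square. I would begin by recording three elementary facts about such a representation. (i) Any two squares of $V_1$ overlap in the $Y$-direction, and similarly for $V_2$; hence \emph{within a part} adjacency is exactly overlap of $X$-projections, so $G[V_1]$ and $G[V_2]$ are unit interval graphs, and being triangle-free each is a disjoint union of paths. (ii) If $u\in V_1$ is a bridge vertex with bridge neighbour $v\in V_2$, then the $Y$-projections of $u$ and $v$ overlap, which forces $y(u)+1\ge y(v)\ge 1+\epsilon$ and $y(v)\le y(u)+1\le 2$; thus every $V_1$-bridge square reaches height at least $1+\epsilon$ and every $V_2$-bridge square reaches down to at most $2$. (iii) Consequently, for a non-edge $uv$ with $u\in V_1\cap V_B$, $v\in V_2\cap V_B$ whose $X$-projections \emph{do} overlap, the only possible witness of non-adjacency is $y(u)+1<y(v)$.

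Next I would rule out odd cycles. Suppose $C=w_1w_2\cdots w_{2k+1}w_1$ is a shortest odd cycle of $H$; since $G$ has no triangle, $k\ge 2$, and being a shortest cycle $C$ is induced. Every edge of $C$ is an edge inside $V_1$, an edge inside $V_2$, or a bridge edge, and in all three cases the $X$-projections of its endpoints overlap, so $I(w_1),\dots,I(w_{2k+1})$ is a cyclic sequence of unit intervals with consecutive members overlapping. I claim some \emph{non}-consecutive pair $w_i,w_j$ also satisfies $I(w_i)\cap I(w_j)\ne\emptyset$: if not, then $I(w_i)\cap I(w_{i+2})=\emptyset$ for every $i$, and combining this with the consecutive overlaps one checks that the triple of left endpoints of $I(w_i),I(w_{i+1}),I(w_{i+2})$ is strictly monotone for every $i$, which is impossible for a cyclic sequence of reals. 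Fix such a pair $w_i,w_j$: they are non-adjacent in $C$, hence in $G$; if they lay in the same part they would be adjacent by (i), so they lie in different parts, and by (iii) their non-adjacency is recorded by a strict inequality of the form $y(\,\cdot\,)+1<y(\,\cdot\,)$ between the endpoint of this pair lying in $V_1$ and the one lying in $V_2$.

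The final step is to turn this into a contradiction, and this is where I expect the real work. Going around $C$, the bridge edges occur in an even number $\ge 2$ of positions (a cycle changes sides an even number of times) and cut $C$ into arcs lying alternately in $V_1$ and in $V_2$; each bridge edge $uv$ of $C$ (with $u\in V_1$, $v\in V_2$) contributes the inequality $y(u)+1\ge y(v)$, while each non-adjacent $X$-overlapping cross pair found above contributes a strict reverse inequality. I would argue that for every pattern of bridge/non-bridge edges along $C$ one can select enough such cross pairs — using the non-crossing property of bridge edges (Lemma~\ref{orient}, realised geometrically: bridge edges do not cross) to control which pairs of $X$-projections are forced to overlap — so that the collected inequalities chain into a closed cycle $t_1\ge t_2> t_3\ge\cdots> t_1$ among the quantities $y(u)+1$ (for $u\in V_1$ on $C$) and $y(v)$ (for $v\in V_2$ on $C$), which is absurd. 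The smallest instances (an induced $C_5$, possibly carrying extra bridge neighbours outside $C$) already produce exactly such a clash, so the obstacle is purely the general bookkeeping. An alternative, equivalent route would be to $2$-colour each path of $G[V_1\cap V_B]$ and of $G[V_2\cap V_B]$ and then propagate the colours across bridge edges; the only obstruction to global consistency is again precisely an odd cycle in $H$, so one is led back to the same argument.
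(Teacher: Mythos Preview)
Your approach is genuinely different from the paper's. The paper argues by induction on the number of bridge edges: it deletes the rightmost bridge edge $e'=u'v'$, applies the induction hypothesis to obtain a $3$-colouring of the smaller graph in which the bridge vertices use only two colours, and then repairs the colouring along the cycle bounded by $e'$ and the next rightmost bridge edge $e''=uv$ (using the third colour on a non-bridge vertex of that cycle when its length is odd). Your plan --- show directly that $G[V_B]$ has no odd cycle via the geometry of a fixed representation --- would be a cleaner structural explanation if it went through.

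Your preliminary facts (i)--(iii) and the monotonicity argument producing one non-consecutive $X$-overlapping pair on $C$ are correct. The difficulty is your final step, and it is a genuine gap rather than bookkeeping. Every inequality you can write is of the form $y(u)+1\ge y(v)$ (from a bridge edge $uv$) or $y(u)+1<y(v)$ (from a cross non-edge with $X$-overlap), with $u\in V_1$ and $v\in V_2$; hence any closed chain must alternate sides at every link. If the bridge edges of $C$ are $u_1v_1,\dots,u_{2m}v_{2m}$, closing a chain using only vertices of $C$ would require, for each $j$, some $u_i$ non-adjacent to $v_j$ with $I(u_i)\cap I(v_j)\ne\emptyset$. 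Already for an induced $C_5$ with exactly two bridge edges this need not occur: one can place the squares so that $I(c_1)\cap I(c_4)=\emptyset$ while $I(c_3)\cap I(c_5)\ne\emptyset$, giving only one usable strict inequality. In that configuration the contradiction arises only after you bring in the bridge neighbour of the interior vertex $c_2$ \emph{outside} $C$ and use triangle-freeness to constrain its position; but your chain, as stated, lives entirely on $C$ (you restrict the $t$'s to $y(u)+1$ and $y(v)$ for $u,v$ on $C$) and never incorporates these external bridge edges. So the missing piece is a structural statement that weaves the external bridge neighbours of the interior vertices of each same-side arc of $C$ into the alternating chain --- and that statement is not supplied.
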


\begin{proof}
Let $G=(V_1\sqcup V_2,E)$ be a triangle free 2SUIG graph with a given partition and representation. 
Then $G[V_1]$ and $G[V_2]$ are disjoint union of paths. 
Let  $u < v$ if the interval corresponding to $u$ lies in the left of the interval 
corresponding to $v$ (we compare the starting points) 
for any $u,v \in V_i$ where $i \in \{1,2\}$.
Furthermore, we say that  $e=uv < u'v'=e'$ if  
$u<u'$ or 
$v<v'$ where 
$e, e' \in E_B$, 
$u,u' \in V_1$ and 
$v,v' \in V_2$.

We prove the statement using induction on the number of bridge edges.
For $i=1$ the graph is a tree, hence admits a 2-coloring. 

Assume that all bridge triangle-free  2SUIG with at most $k$ bridges admits a 3-coloring such that the bridge vertices 
received only two of the three colors.
Let $G$ be a bridge triangle-free  2SUIG with $k+1$ bridges. 
Let $e'=u'v'$ be an edge of $G$ such that $e< e'$ for all $e \in E_B$. 
Delete $e'$ from $G$ to obtain the graph $G'$. Note that $G'$ admits a 3-coloring where all the bridge edges received 
only two of the three colors.
Let $e''= uv$ be the edge of $G'$ such that $e< e''$ for all bridge edge $e$ of $G'$.  Suppose they received the colors 
$c_1$ and $c_2$. 
 The subgraph induced by the paths $u$ to $u'$ and $v$ to $v'$ is a cycle. If it is an even cycle then we are done. Otherwise, we can always assign the third color to a non-bridge vertex of the cycle and complete the required coloring. 
\end{proof}

Note that the proof of our above result has an algorithmic aspect as well and it is not difficult to observe the following result:

\begin{lemma}\label{trifreecol}
The chormatic number of a triangle free 2SUIG graph can be decided in polynomial time.
\end{lemma}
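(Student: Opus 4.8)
The plan is to exploit the fact that a triangle-free 2SUIG has a very restricted chromatic number, so that computing $\chi(G)$ reduces to a few linear-time tests rather than to any hard colouring problem. First I would observe that since $G$ is triangle-free we have $\omega(G) \le 2$, and that Lemma~\ref{3col} together with the explicit colouring built in its proof yields a proper $3$-colouring of the whole graph (triangle-free implies bridge triangle-free, so that argument applies). Hence $\chi(G) \le 3$, and for a nonempty graph $\chi(G) \in \{1,2,3\}$. This collapses the task to deciding which of these three values occurs.

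Next I would dispatch the cases by standard tests. We have $\chi(G) = 1$ exactly when $G$ has no edges, checkable in $O(|E|)$ time. Assuming $G$ has at least one edge, $\chi(G) = 2$ exactly when $G$ is bipartite, i.e.\ contains no odd cycle; this is decided in $O(|V|+|E|)$ time by a breadth-first (or depth-first) search that attempts to $2$-colour each connected component and reports failure upon meeting a monochromatic edge. If that test succeeds, output $2$; otherwise $G$ is not $2$-colourable, yet it is $3$-colourable by Lemma~\ref{3col}, so $\chi(G) = 3$.

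Putting these together gives the decision procedure: return $1$ if $G$ is edgeless, $2$ if $G$ is bipartite with at least one edge, and $3$ otherwise. Every branch runs in linear time, so $\chi(G)$ is determined in $O(|V|+|E|)$ time, which is certainly polynomial, establishing the lemma.

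The only nontrivial ingredient, and hence the main point of the argument, is the a priori ceiling $\chi(G) \le 3$ supplied by Lemma~\ref{3col}. This is precisely what lets us avoid ever testing $3$-colourability directly, which is \textit{NP}-hard even for triangle-free graphs in general. Once the upper bound is in hand, the only remaining distinction is between $\chi = 2$ and $\chi = 3$, and that distinction is exactly the polynomial-time-decidable question of bipartiteness; there is no genuine computational obstacle beyond invoking the structural $3$-colourability.
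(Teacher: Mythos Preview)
Your proposal is correct and follows essentially the same line as the paper. The paper does not spell out a proof of this lemma at all; it merely remarks that the inductive construction in Lemma~\ref{3col} is algorithmic and that the present lemma is ``not difficult to observe'' from it. Your write-up makes explicit what the paper leaves implicit: the $3$-colouring produced in the proof of Lemma~\ref{3col} caps $\chi(G)$ at $3$, after which distinguishing $\chi=1$, $2$, or $3$ reduces to an edge-check and a bipartiteness test, both linear-time. The only minor redundancy is your aside that ``triangle-free implies bridge triangle-free''; Lemma~\ref{3col} is already stated for triangle-free 2SUIGs, so no such reduction is needed.
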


Now we are ready to prove our main result.

\medskip

\noindent  \textbf{\textit{Proof of Theorem~\ref{th trifreecol}:}}
 Let $G= (V_1\sqcup V_2,E)$ be a bridge triangle free 2SUIG  with a given partition. 
  Note that $G[V_1]$ and $G[V_2]$ are unit interval graphs. 
  We prove the statement using induction on clique number $\omega(G)$.
Note that the theorem  is true for graphs $G$ with $\omega(G)=2$ by Lemma~\ref{trifreecol}.
Assume that the theorem is true for all bridge triangle free 2SUIG $G$ with $\omega(G)\leq m$.
Let $G=(V_1\sqcup V_2,E)$ be a bridge triangle free 2SUIG with $\omega(G)= m+1$. 
We delete a maximal independent set from $G$ to obtain the graph $G'$. 
Note that  $\omega(G') \leq m$ and hence admits a $(m+1)$-coloring by our induction hypothesis.
Now we extend this coloring by assigning a new color to the vertices of the 
deleted maximal independent set to obtain a $(m+2)$-coloring of  $G$.

\medskip

\section{Matrix characterization}\label{sec matrix}
A graph $G$ is a \textit{2-stab unit independent interval graph (2SUIIG)} if it admits a 2SUIG representation where 
the upper partition induces an independent set. 
The corresponding representation is a \textit{2SUIIG representation}.
The corresponding vertex partition $V_1 \sqcup V_2$ is a \textit{strict partition} 
if $G[V_1 \cup \{v\}]$ is not a unit interval graph for any $v \in V_2$ (upper partition). 
 It is easy to see that a graph is a $2SUIIG$ if and only if it has a strict partition.
 For the rest of the section denote the $x$-coordinate of the bottom-left corner of a unit square representing a vertex $v$ by $s_v$.

Note that the class of unit interval graphs is a subclass of 2SUIIG. Moreover, note that the class of 2SUIIG is not perfect as the 5-cycle admits a 
2UIIG representation. Now we characterize the adjacency matrix of a 2SUIIG. 
We will define some matrix forms for that. 
The matrices we consider are 0-1 matrices. The element in the $i^{th}$ row and $j^{th}$ column of a matrix $\mathscr{M}$ is denoted by  
$\mathscr{M}_{ij}$. Also, the $i^{th}$ row and $j^{th}$ column of  $\mathscr{M}$ is denoted by $\mathscr{M}_{i*}$ and $\mathscr{M}_{*j}$, respectively. 
Furthermore,  $First(\mathscr{M}_{i*})$ and $Last(\mathscr{M}_{i*})$ denotes the column indices of the 
 first and last non-zero entries of  $\mathscr{M}_{i*}$, respectively.  

\begin{definition}
A \textit{stair normal interval representation (SNIR) matrix} $\mathscr{A}$  is a 0-1 matrix  with the following properties:

\begin{itemize}
\item[$(i)$] The 1's in a row are consecutive.

\item[$(ii)$] For $j < i$  we have $First(\mathscr{A}_{j*}) \leq First(\mathscr{A}_{i*})$ and 
$Last(\mathscr{A}_{j*}) \leq Last(\mathscr{A}_{i*})$.
\end{itemize}
\end{definition}

Mertzios~\cite{Mer} showed that a graph is a unit interval graph if and only if its adjacency matrix is 
a SNIR matrix. Let $[u \leadsto v]$ denote a longest directed path (not necessarily unique) in $\overrightarrow{I^c}$ 
from $u$ to $v$ and its  length is denoted by $l_{uv}$.


\begin{definition}
A \textit{proper stab adjacency (PSA) matrix} $\mathscr{A}$ is a 0-1 matrix  with the following properties:

\begin{itemize}
\item[$(i)$] The 1's in a row are consecutive and each row has at most two 1's.

\item[$(ii)$] For $j < i$ and $\displaystyle\sum\limits_{k} \mathscr{A}_{jk} =2$ 
we have $First(\mathscr{A}_{j*}) \leq First(\mathscr{A}_{i*})$.

\item[$(iii)$] For $j < i$ and $\displaystyle\sum\limits_{k} \mathscr{A}_{jk} =1$ 
we have $First(\mathscr{A}_{j*}) \leq First(\mathscr{A}_{i*}) + 1$. 
Equality holds only when  $\displaystyle\sum\limits_{k} \mathscr{A}_{ik} =2$.
\end{itemize}
\end{definition}

\begin{definition}
An \textit{independence stair stab representation (ISSR)  matrix} 

\begin{center}
\[ \mathscr{A} _{(m+n) \times (m +n)} = \left[ \begin{array}{c|c}
\mathscr{A'}_{m \times m} & \mathscr{A''}_{m \times n} \\
 \hline
 \mathscr{A''}_{n \times m}^t & 0_{n \times n} \end{array} \right]\]
\end{center}

 is a 0-1 matrix  with the following properties:

\begin{itemize}
\item[$(i)$] The submatrix $\mathscr{A'}$  is a SNIR matrix.

\item[$(ii)$] The submatrix $\mathscr{A''}$ is a PSA matrix.
\end{itemize}

Note that using the characterization given by Mertzios~\cite{Mer} the SNIR submatrix $\mathscr{A'}$ corresponds to a
 a unit interval graph $I$ (say). Let $\overrightarrow{I^c}$ be any transitive orientation of its complement.

\begin{itemize}
\item[$(iii)$] Let $j <i$, $m < k \leq m+n$,  $\mathscr{A}_{ik} = 1$ and $\mathscr{A}_{jk}=1$. Let the rows $\mathscr{A}_{i*}$ and $\mathscr{A}_{j*}$
correspond to the vertices $u$ and $v$, respectively,  of $I$.
Then there is no directed path of length at least two from    $v$ to $u$ in $\overrightarrow{I^c}$. 
\item[$(iv)$] Let $m < k \leq l \leq m+n$,  $\mathscr{A}_{ik} = 1$ and $\mathscr{A}_{jl}=1$. Suppose
$u_1$ and $u_p$ are the vertices corresponding to the rows $\mathscr{A}_{i*}$ and $\mathscr{A}_{j*}$, respectively.
If $[u_1u_2...u_p]$ is a shortest path between $u_1$ and $u_p$ in $I$, then  $(l-k) \leq p + 1$.

\end{itemize}
\end{definition}

\medskip

\noindent Now we are ready to state our main result:

\begin{theorem}\label{matrix}
A graph is a 2SUIIG graph if and only if it can be represented in ISSR matrix form.
\end{theorem}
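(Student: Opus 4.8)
The plan is to prove both directions by translating geometric data about a 2SUIIG representation into the combinatorial conditions defining an ISSR matrix, and back. Throughout I work with a strict partition $V_1 \sqcup V_2$ (which exists by the remark preceding the matrix definitions), with $|V_1| = m$ and $|V_2| = n$, and I order the rows/columns so that the first $m$ indices correspond to $V_1$ (the lower stab line) and the last $n$ indices to $V_2$. The adjacency matrix then has the block shape displayed in the definition of ISSR, with $\mathscr{A}'$ the adjacency matrix of the unit interval graph $I = G[V_1]$, the zero block coming from $V_2$ being independent, and $\mathscr{A}''$ recording the bridge edges. For the ordering of rows within $V_1$ I use the standard left-to-right ordering of the unit squares by $s_v$; for $V_2$ I likewise order by $s_v$. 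Condition $(i)$ of ISSR (that $\mathscr{A}'$ is SNIR) is then exactly Mertzios' theorem~\cite{Mer} applied to $G[V_1]$, so that part is free in both directions.

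For the forward direction, assume $G$ has a 2SUIIG representation. I first establish that each row of $\mathscr{A}''$ has at most two $1$'s: a unit square on the upper stab line can intersect a unit square on the lower stab line only if their $x$-projections overlap, and since the lower squares have length-one projections that (after passing to a unit interval representation) are essentially distinct, a fixed upper square meets at most two of them — this is where the ``unit'' hypothesis is doing real work, and I would spell out the $\epsilon \in (0,1)$ geometry carefully. Consecutiveness of the $1$'s in each row of $\mathscr{A}''$ is immediate from interval overlap. Conditions $(ii)$ and $(iii)$ of the PSA definition then record how the left endpoint $s_v$ of an upper square, together with whether it has one or two lower neighbours, constrains the left endpoint of a later upper square; I would derive these by a direct case analysis on the relative positions of two upper squares and their lower neighbourhoods, using that an upper square with two lower neighbours is ``pinned'' more tightly than one with a single neighbour. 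Finally, conditions $(iii)$ and $(iv)$ of ISSR translate the fact that bridge edges cannot ``cross'' incompatibly — condition $(iii)$ is essentially Lemma~\ref{orient} rephrased via the transitive orientation $\overrightarrow{I^c}$ (a directed path of length $\geq 2$ from $v$ to $u$ in $\overrightarrow{I^c}$ would force $u,v$ far apart in $I$, contradicting that both carry a bridge edge to the same upper square), while condition $(iv)$ bounds how far apart in $I$ the lower endpoints of two distinct bridge edges can be in terms of the horizontal gap between their upper squares, using the length $l_{uv}$ / shortest-path data.

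For the converse, suppose $\mathscr{A}$ is in ISSR form. By Mertzios' theorem, $\mathscr{A}'$ gives a unit interval graph $I = G[V_1]$; fix a unit interval representation of it, which I may rescale so that consecutive ``clique boundaries'' are well separated, and place these as unit squares on the lower stab line. For each row of $\mathscr{A}''$ (each vertex of $V_2$) I must place a unit square on the upper stab line whose $x$-projection meets exactly the lower squares indexed by the $1$'s in that row and no bridge edge among the $V_2$ vertices (automatic, since the zero block forbids them and the squares sit on a line $1+\epsilon$ above, too high to reach each other — wait, they are at the same height, so I instead argue they can be made horizontally disjoint, or rather I only need their intersection graph to be correct, and since $V_2$ is independent I must separate them horizontally). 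I would process the upper vertices in the row order $j = 1, 2, \dots, n$ and maintain the invariant that the left endpoint $s_v$ is placed as far left as conditions $(ii)$/$(iii)$ of PSA permit; the point is that these conditions are exactly the ones guaranteeing that such a greedy placement never creates a spurious overlap with an already-placed upper square nor with an unintended lower square, and conditions $(iii)$/$(iv)$ of ISSR guarantee the required overlaps with the intended lower squares are simultaneously realizable. This amounts to checking a short list of local consistency cases.

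The main obstacle I expect is the converse's placement argument — specifically, showing that the greedy left-to-right placement of the upper squares is globally consistent, i.e.\ that conditions $(ii)$--$(iv)$ are not merely necessary but collectively sufficient to avoid every bad configuration. The delicate point is that an upper square's position is constrained simultaneously from the left (by previously placed upper squares, via PSA) and by which lower squares it must or must not meet (via ISSR $(iii)$,$(iv)$ through the shortest-path structure of $I$), and one must verify these constraints never conflict. I would handle this by an explicit induction on the upper-vertex index, at each step producing an open interval of admissible values for $s_v$ and showing it is nonempty using exactly the four numerical inequalities in the definitions; the length bounds $(l-k) \leq p+1$ in ISSR$(iv)$ and the ``$+1$, with equality only when the later row has two $1$'s'' clause in PSA$(iii)$ are precisely calibrated to make this interval nonempty, and tracing why is the real content of the proof.
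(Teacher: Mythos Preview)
Your overall architecture---verify the ISSR conditions geometrically for the forward direction, then build a representation by greedy/inductive placement of the upper squares for the converse---matches the paper's proof. But you have transposed the $m\times n$ block $\mathscr{A}''$, and this breaks the forward direction.

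By the block decomposition in the ISSR definition, the rows of $\mathscr{A}''$ are indexed by $V_1$ (the lower partition) and the columns by $V_2$ (the upper, independent partition). So PSA condition~$(i)$, ``each row has at most two $1$'s,'' says that every \emph{lower} vertex has at most two bridge neighbours in $V_2$. This is Lemma~\ref{ubn} in the paper, and it holds precisely because the upper squares are pairwise disjoint (independence of $V_2$), so a single unit-width lower square can overlap at most two of them. Your justification goes the other way: you argue that ``a fixed upper square meets at most two of them'' (the lower squares). That claim is false---lower squares may overlap one another heavily, and a single upper square can meet an arbitrarily large clique of lower squares. The same transposition appears in your converse when you write ``for each row of $\mathscr{A}''$ (each vertex of $V_2$)''; the objects you want to place one at a time are the \emph{columns} of $\mathscr{A}''$.

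Once the indexing is corrected, your plan is essentially the paper's: the forward direction is exactly Lemmas~\ref{ubn}--\ref{lbc1} (those lemmas supply PSA conditions $(i)$--$(iii)$ and ISSR conditions $(iii)$--$(iv)$), and your greedy left-to-right placement of the $V_2$ squares is the paper's induction on the number of columns of the PSA submatrix, with the same two-case analysis (can the new bridge edge be realised without disturbing independence of $V_2$; if not, trace the obstruction to a violation of ISSR~$(iv)$). So the remaining work is to redo the PSA verification with the roles of $V_1$ and $V_2$ straightened out---in particular, PSA~$(ii)$ and~$(iii)$ become statements about how the column-index of the leftmost upper neighbour moves as you scan the lower vertices left to right, which is what Lemmas~\ref{lbc2} and~\ref{lbc1} actually say.
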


To prove Theorem~\ref{matrix} we need to prove some lemmas.

\begin{lemma}\label{ubn}
Let $G= (V_1 \sqcup V_2,E)$ be a 2SUIIG with upper partition $V_2$. Then  for a vertex $v \in V_1$ we have $|N_B(v)| \leq 2$.

Moreover, if $N_B(v) = \{x,z\}$ then it is not possible to have a vertex $y \in V_2$ with $s_x < s_y < s_z$. 
\end{lemma}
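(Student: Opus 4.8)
\textbf{Proof proposal for Lemma~\ref{ubn}.}

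The plan is to argue directly from a 2SUIIG representation $R(G)$ with strict partition $V_1 \sqcup V_2$, using the fact that every vertex is a unit square and that $V_2$ induces an independent set. First I would set up coordinates: each vertex $w$ is represented by a unit square whose bottom-left corner has $x$-coordinate $s_w$; for a vertex in $V_1$ the square meets the lower stab line $y=1$, and for a vertex in $V_2$ it meets the upper stab line $y=2+\epsilon$. Two squares intersect if and only if their $x$-projections overlap \emph{and} their $y$-projections overlap, and since $0<\epsilon<1$ the only squares whose $y$-projections can overlap across the two stab lines are precisely those realizing bridge edges; in particular a square of $V_1$ (occupying $y$-interval $[1,2]$, say) and a square of $V_2$ (occupying $y$-interval roughly $[2+\epsilon-1,2+\epsilon]=[1+\epsilon,2+\epsilon]$) overlap in $y$ always, so for $u\in V_1$ and $x\in V_2$ adjacency is equivalent to $|s_u - s_x| < 1$. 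Thus $N_B(v)$ for $v\in V_1$ is exactly the set of $x\in V_2$ with $s_x \in (s_v-1, s_v+1)$.

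Next I would exploit independence of $V_2$: any two vertices $x,z\in V_2$ are non-adjacent, so $|s_x - s_z|\geq 1$. Hence the open interval $(s_v-1,s_v+1)$, which has length $2$, can contain the starting points of at most two vertices of $V_2$ — if it contained three, say $s_{x_1}<s_{x_2}<s_{x_3}$ all in this interval, then $s_{x_3}-s_{x_1}<2$ while $s_{x_3}-s_{x_1} = (s_{x_3}-s_{x_2})+(s_{x_2}-s_{x_1})\geq 2$, a contradiction. This gives $|N_B(v)|\leq 2$. (One should check the boundary case $|s_x-s_z|=1$: with unit \emph{squares}, $x$-projections $[s_x,s_x+1]$ and $[s_z,s_z+1]$ sharing only the single point $s_x+1=s_z$ — whether this counts as intersection can be normalized away by a tiny perturbation, so we may assume strict inequality $|s_x-s_z|>1$ for non-adjacent pairs; I would state this normalization once at the start.)

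For the "moreover" part, suppose $N_B(v)=\{x,z\}$ with, after relabeling, $s_x < s_z$, and suppose for contradiction there is $y\in V_2$ with $s_x < s_y < s_z$. Since $x,z\in N_B(v)$ we have $s_x, s_z \in (s_v-1, s_v+1)$, hence $s_y\in(s_v-1,s_v+1)$ as well, so $y$ is also a bridge neighbor of $v$; but then $N_B(v)\supseteq\{x,y,z\}$ has size $3$, contradicting the first part (equivalently, it directly contradicts $N_B(v)=\{x,z\}$ since $y\notin\{x,z\}$).

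The main obstacle — really the only subtle point — is the treatment of boundary/degenerate intersections (closed vs.\ open squares, and squares touching at a single point or a single line), together with making sure the claimed equivalence "$u\sim x \iff |s_u-s_x|<1$ for $u\in V_1$, $x\in V_2$" is legitimate, i.e.\ that no $V_1$–$V_2$ adjacency can fail the $y$-overlap condition. Both are handled by the standing observation, already implicit in the paper's setup, that with unit height $1$ and stab lines distance $1+\epsilon<2$ apart every $V_1$ square and every $V_2$ square overlap vertically, so cross-partition adjacency is governed purely by the $x$-coordinates; and by adopting at the outset the harmless general-position convention that all relevant $s_w$ are distinct and no two squares touch in a single point. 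Once that is in place the rest is the short length/counting argument above.
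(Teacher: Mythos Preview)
Your overall approach matches the paper's, and your argument for $|N_B(v)|\le 2$ is fine. But the claimed equivalence ``$u\sim x \iff |s_u-s_x|<1$ for $u\in V_1$, $x\in V_2$'' is false: only the forward implication holds. A $V_1$-square may have $y$-interval $[0,1]$ while a $V_2$-square has $y$-interval $[2+\epsilon,3+\epsilon]$; these do not overlap vertically, so $x$-overlap alone does not force a bridge edge. Your ``standing observation'' that every $V_1$ square and every $V_2$ square overlap vertically is simply wrong, and no general-position perturbation fixes it.

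This matters precisely in your ``moreover'' argument, where from $s_y\in(s_v-1,s_v+1)$ you infer $y\in N_B(v)$. That step is unjustified. Fortunately it is also unnecessary: once you have $s_x,s_z\in(s_v-1,s_v+1)$ and $s_x<s_y<s_z$ with $x,y,z\in V_2$ pairwise independent, the first fact gives $s_z-s_x<2$ while independence gives $s_y-s_x>1$ and $s_z-s_y>1$, hence $s_z-s_x>2$. This is exactly your first-part counting argument (and is the paper's proof as well), and it never needs $y$ to be a neighbour of $v$. So delete the sentence asserting $y\in N_B(v)$ and invoke the length contradiction directly.
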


\begin{proof}
Let $x,y,z \in V_2$ be such that $s_x < s_y < s_z$. As $x,y,z$ are independent, we must have $s_x +1 < s_y < s_y +1 < s_z$. 
If a vertex $v \in V_1$ is adjacent to both $x$ and $z$ then we must have $s_v \leq s_x +1$ and $s_z \leq s_v +1$. 
This implies $s_v \leq s_x +1 < s_y < s_z -1 \leq s_v$, a contradiction. 
 \end{proof}

\begin{lemma}\label{ubn_clique}
Let $G= (V_1 \sqcup V_2,E)$ be a 2SUIIG with upper partition $V_2$. Then for two vertices $u,v \in V_1$ 
we have $\left|\displaystyle\bigcup\limits_{k=1}^{p} N_B(u_k)\right| \leq p+1$ where 
$u_k$ are the vertices of the shortest path $(P)$ between $u,v$ in $G[V_1]$ and
$p$ is the length of $P$.
\end{lemma}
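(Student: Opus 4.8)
\textbf{Proof proposal for Lemma~\ref{ubn_clique}.}

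The plan is to exploit the geometry of a 2SUIIG representation together with Lemma~\ref{ubn}. Fix a 2SUIIG representation of $G$ with lower partition $V_1$ and upper (independent) partition $V_2$, and let $P = u_1 u_2 \cdots u_{p+1}$ be a shortest path in $G[V_1]$ from $u = u_1$ to $v = u_{p+1}$ of length $p$. Since $G[V_1]$ is a unit interval graph, walking along $P$ moves monotonically through the stab line: after reordering if necessary, I may assume $s_{u_1} \le s_{u_2} \le \cdots \le s_{u_{p+1}}$, and because $P$ is a \emph{shortest} path, consecutive squares overlap but $u_k$ and $u_{k+2}$ are non-adjacent, so $s_{u_{k+2}} > s_{u_k} + 1$ for every $k$. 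Hence the intervals $[s_{u_k}, s_{u_k}+1]$ on the lower stab line, for $k$ running over the path, cover at most the range $[s_{u_1}, s_{u_{p+1}}+1]$, whose length is $s_{u_{p+1}} - s_{u_1} + 1$; the non-adjacency of $u_k, u_{k+2}$ forces this length to be less than $p+1$ only in a weak sense, so I will instead track bridge neighbours directly.

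The core step: every vertex in $\bigcup_k N_B(u_k)$ lies in $V_2$, and by Lemma~\ref{ubn} each $u_k$ has at most two bridge neighbours, which (again by Lemma~\ref{ubn}) are ``consecutive'' in the left-to-right order of the independent squares on the upper stab line — there is no third $V_2$-square strictly between them. I would order the vertices $w_1, w_2, \ldots$ of $V_2$ by their values $s_{w_j}$; since $V_2$ is independent and each square has unit width, $s_{w_{j+1}} > s_{w_j} + 1$. A vertex $u_k$ adjacent to $w_j$ satisfies $s_{w_j} - 1 \le s_{u_k} \le s_{w_j} + 1$, i.e. $u_k$'s left corner lies in an interval of length $2$ around $s_{w_j}$; combined with the separation $s_{w_{j+1}} - s_{w_j} > 1$ and with Lemma~\ref{ubn}'s ``no third square in between'' property, the set of $w_j$'s adjacent to the consecutive squares $u_1, \ldots, u_{p+1}$ forms a contiguous block $w_a, w_{a+1}, \ldots, w_b$ in this order. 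It then suffices to bound $b - a + 1 \le p+1$. I would do this by an injective-assignment argument: assign to each $w_j$ in the block the smallest index $k(j)$ with $u_{k(j)} \in N_B(w_j)$; show $k$ is strictly increasing in $j$ except possibly when two consecutive $w_j$'s share a single $u_k$ as their only path-neighbour, and use Lemma~\ref{ubn} (a single $u_k$ has at most two bridge neighbours, and they are adjacent-in-order) to conclude each value $k$ is hit at most twice but a ``double hit'' at $k$ blocks a double hit at $k\pm1$ — a discharging/counting step that yields $b-a+1 \le (p+1)+1$... which is one too many, so the final accounting must instead observe that the two extreme squares $u_1$ and $u_{p+1}$ can each contribute an ``extra'' bridge neighbour only on their outer side, giving the clean bound $p+1$.

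The main obstacle I anticipate is precisely this last counting step: turning the local facts ``$|N_B(u_k)| \le 2$'' and ``the two bridge neighbours of $u_k$ are order-consecutive in $V_2$ with nothing between them'' into the global bound $p+1$ rather than a naive $2(p+1)$ or $p+2$. The right way is probably induction on $p$: the inductive step removes $u_{p+1}$, applies the hypothesis to $u_1, \ldots, u_p$ to get $\bigl|\bigcup_{k=1}^{p} N_B(u_k)\bigr| \le p$ — wait, the hypothesis gives $\le p$ for a path of length $p-1$ — then argues that $N_B(u_{p+1})$ adds at most one new vertex to the union, because $u_p$ and $u_{p+1}$ overlap, so their bridge-neighbour ``windows'' on the upper line overlap, so at most one of $u_{p+1}$'s (at most two) bridge neighbours can fail to already be a bridge neighbour of some $u_k$ with $k \le p$. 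Establishing that ``adds at most one'' claim rigorously from the unit-square geometry and Lemma~\ref{ubn} is the crux, and it is where I would spend the real work.
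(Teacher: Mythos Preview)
Your proposal has a genuine gap, and ironically it is in the step you yourself flag as ``the crux''. The inductive claim that appending $u_{p+1}$ to the path adds \emph{at most one} new bridge neighbour to $\bigcup_{k\le p} N_B(u_k)$ is simply false: take $u_p$ with $N_B(u_p)=\emptyset$ (its square sits low enough that no upper square reaches it) and $u_{p+1}$ with two bridge neighbours. Both are new. The overall bound survives because the induction hypothesis then has slack, but that means the naive induction does not go through; you would need an amortised statement, and you never formulate one.

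More to the point, the paper does \emph{not} argue by induction or discharging at all --- it uses exactly the global span argument you wrote down in your first paragraph and then abandoned as ``only weak''. The argument is: assume $|N|\ge p+2$; since the vertices of $N\subseteq V_2$ are pairwise non-adjacent unit squares, their leftmost and rightmost members $z,y$ satisfy $s_y-s_z>p+1$. But $z\in N_B(u_i)$ and $y\in N_B(u_j)$ force $s_z\ge s_{u_1}-1$ and $s_y\le s_{u_{p+1}}+1$, while adjacency along the path gives $s_{u_{p+1}}-s_{u_1}<p$. Combining, $s_y-s_z< p+2$, contradicting $s_y-s_z>p+1$ once one is careful about which inequalities are strict. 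That is the whole proof. So the step you dismissed is precisely the one that works, and the route you chose instead does not close without substantially more work.
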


\begin{proof}
Let $N=\displaystyle\bigcup\limits_{k=1}^{p} N_B(u_k)$ and $|N| > p+1$.
Let there is a representation $R$ of $G$.
Without loss of generality assume $s_u < s_v$ in $R$.
There must be a vertex $y,z\in N$ such that $(z,u),(y,v) \in E$,
$s_u-1<s_z<s_u$ and 
$s_z+p+1+\epsilon$, $0<\epsilon<1$. 
In any representation of $G$, $|s_u - s_v| $ is at most $p-\epsilon$,$0<\epsilon<1$.
Then the edge between $y,v$ cannot be realised leading to a contradiction.
\end{proof}

The above lemmas gives an upper bound on the number of bridge neighbours of a vertex in $V_1$.

\begin{lemma}\label{lbc2}
Let $G= (V_1 \sqcup V_2,E)$ be a 2SUIIG with upper partition $V_2$. 
For vertices $u,v \in V_1$ with $|N_B(u)| = 2$ and $s_u<s_v$ there exists $w\in N_B(u)$ such that for any
 $x \in N_B(v)$ we have $s_w \leq s_x$.
\end{lemma}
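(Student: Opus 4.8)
\textbf{Proof proposal for Lemma~\ref{lbc2}.}

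The plan is to argue directly from the geometry of a fixed 2SUIIG representation $R$ of $G$, using the structural facts already established in Lemmas~\ref{ubn} and~\ref{ubn_clique}. Let $N_B(u) = \{x, z\}$ with $s_x < s_z$. By Lemma~\ref{ubn} there is no vertex $y \in V_2$ with $s_x < s_y < s_z$, and since $u$ is adjacent to both $x$ and $z$ we have the sandwich $s_z - 1 \le s_u \le s_x + 1$, so that $s_z - s_x \le 2$. I would take $w = x$, the bridge neighbour of $u$ whose unit square sits furthest to the left, and show $s_x \le s_t$ for every $t \in N_B(v)$. Since $s_u < s_v$ and both $u,v$ lie on the lower stab line, the interval for $v$ starts strictly to the right of the interval for $u$; the goal is to transfer this to the bridge neighbours on the upper line.

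The key step is to suppose, for contradiction, that some $t \in N_B(v)$ satisfies $s_t < s_x$. Then $t, x, z$ are three vertices of $V_2$; since $V_2$ is independent, consecutive ones among them are separated by more than $1$ in their $s$-values, so $s_t + 1 < s_x$ and hence $s_t + 1 < s_x \le s_z - \text{(nonneg)}$, placing $t$ strictly to the left of both $x$ and $z$. Now use adjacency: $v \sim t$ forces $s_v \le s_t + 1$, while $u \sim z$ forces $s_z \le s_u + 1$. Combining with $s_u < s_v$ we get $s_z \le s_u + 1 < s_v + 1 \le s_t + 2$. On the other hand $u \sim x$ gives $s_u \ge s_x - 1 > s_t$, and $u \sim z$ gives $s_u \le s_x + 1$. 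Chasing these inequalities together with the independence spacing $s_x > s_t + 1$ and $s_z > s_x + 1$ (if $z$ is far enough; otherwise $s_z \le s_x + $ small and one still gets a contradiction from the $u$-sandwich $s_z - 1 \le s_u \le s_x + 1 \le s_t + 2$, forcing $s_u$ and $s_v$ into an interval too short to accommodate $s_u < s_v$ with $v$ also reaching $t$ on the far left). The arithmetic closes because $u$ must simultaneously reach leftward to touch $t$'s neighbour region near $x$ and $v$ must reach even further left to touch $t$, contradicting $s_u < s_v$.

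The main obstacle I anticipate is bookkeeping the $\epsilon$-offsets and the two cases according to whether $t$ lies strictly left of $x$ or the degenerate possibilities where some of the squares are tangent; the clean way to handle this is to first record the three basic adjacency constraints
\begin{equation}\nonumber
s_z - 1 \le s_u \le s_x + 1, \qquad s_v \le s_t + 1, \qquad s_u < s_v,
\end{equation}
and the independence constraint that any two vertices of $V_2$ have $s$-values differing by more than $1$, and then observe that $s_t < s_x$ would give $s_v \le s_t + 1 < s_x + 1 \ge ?$ — so the real leverage is $s_u \ge s_x - 1$ from $u \sim x$, which with $s_t + 1 < s_x$ yields $s_u > s_t$, hence $s_v > s_t$, yet $v \sim t$ needs $s_t \le s_v \le s_t+1$, and then $s_v - s_u < 1$ while also $s_v > s_u$, squeezing $u$ and $v$ to within distance $1$; combined with the fact that $v$'s square must also reach $t$ while $u$'s square reaches $x$ which is more than $1$ to the right of $t$, we find $u$ cannot reach $x$ after all, the desired contradiction. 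Therefore $s_x \le s_t$ for all $t \in N_B(v)$, and $w = x$ works.
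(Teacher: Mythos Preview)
Your approach is genuinely different from the paper's. The paper argues topologically: for any $w\in N_B(u)$, the union of the two intersecting unit squares for $u$ and $w$ contains a full vertical slice of the strip between the stab lines, so it separates that strip into a left and a right component; then a square $x$ on the upper line strictly left of $w$ and a square $v$ on the lower line strictly right of $u$ supposedly cannot meet. Your route is purely arithmetic with the $s$-coordinates.

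The good news is that the correct arithmetic proof is already sitting inside your second paragraph, and it is cleaner and more honest than the paper's sketch (which, as written, does not visibly use the hypothesis $|N_B(u)|=2$; that hypothesis is in fact necessary, as one can build counterexamples with $|N_B(u)|=1$). With $N_B(u)=\{x,z\}$, $s_x<s_z$, and $t\in N_B(v)$ with $s_t<s_x$, independence of $V_2$ gives $s_t+1<s_x$ and $s_x+1<s_z$, hence $s_z>s_t+2$. On the other hand $u\sim z$ gives $s_z\le s_u+1$, $s_u<s_v$ gives $s_u+1<s_v+1$, and $v\sim t$ gives $s_v\le s_t+1$, so $s_z<s_t+2$. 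That is the whole proof.

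The bad news is that you bury this chain and then talk yourself out of it. Your parenthetical ``if $z$ is far enough; otherwise \ldots'' is unwarranted: $s_z>s_x+1$ holds unconditionally because $x,z\in V_2$ are independent and both meet the upper stab line (so their $y$-ranges overlap, forcing their $x$-ranges to be disjoint). Your third paragraph then abandons $z$ and tries to close the argument using only $x$, ending with ``we find $u$ cannot reach $x$ after all'' --- but that conclusion does not follow from the inequalities you have assembled; without $z$, all you get is $1<s_x-s_t<2$, which is no contradiction. The leverage really is the second bridge neighbour $z$, and you should state the two-line chain above and stop.
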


\begin{proof}
Let there are vertices $w,x \in V_2$ such that $s_x < s_w$ where $w\in N_B(u)$ and $x \in N_B(v)\setminus N_B(u)$. 
Then the union of the unit squares corresponding to $u$ and $w$ divides the region enclosed by the stab lines 
into two disjoint parts. 
Then the intersection of the unit squares corresponding to $v$ and $x$ cannot be realised.  
\end{proof}

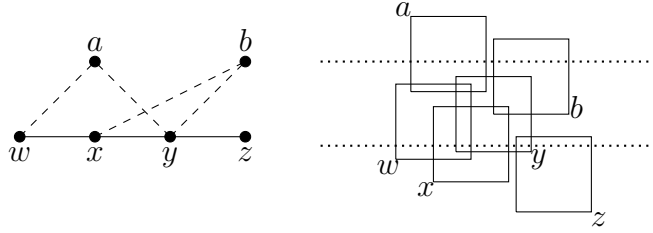
\begin{figure}

\centering
\begin{tikzpicture}

\filldraw [black] (0,0) circle (2pt) {node[below]{$w$}};
\filldraw [black] (1,0) circle (2pt) {node[below]{$x$}};
\filldraw [black] (2,0) circle (2pt) {node[below]{$y$}};
\filldraw [black] (3,0) circle (2pt) {node[below]{$z$}};

\filldraw [black] (1,1) circle (2pt) {node[above]{$a$}};
\filldraw [black] (3,1) circle (2pt) {node[above]{$b$}};

\draw[-] (0,0) -- (3,0);

\draw[dashed] (0,0) -- (1,1);
\draw[dashed] (2,0) -- (1,1);
\draw[dashed] (1,0) -- (3,1);
\draw[dashed] (2,0) -- (3,1);


\draw[-] (5,-.3) -- (6,-.3) -- (6,.7) -- (5,.7) -- (5,-.3);

\node at (4.9,-.4) {$w$};

\draw[-] (5.5,-.6) -- (6.5,-.6) -- (6.5,.4) -- (5.5,.4) -- (5.5,-.6);

\node at (5.4,-.7) {$x$};

\draw[-] (5.2,.6) -- (6.2,.6) -- (6.2,1.6) -- (5.2,1.6) -- (5.2,.6);

\node at (5.1,1.7) {$a$};

\draw[-] (5.8,-.2) -- (6.8,-.2) -- (6.8,.8) -- (5.8,.8) -- (5.8,-.2);

\node at (6.9,-.3) {$y$};

\draw[-] (6.3,.3) -- (7.3,.3) -- (7.3,1.3) -- (6.3,1.3) -- (6.3,.3);

\node at (7.4,.4) {$b$};

\draw[-] (6.6,-1) -- (7.6,-1) -- (7.6,0) -- (6.6,0) -- (6.6,-1);

\node at (7.7,-1.1) {$z$};


\draw[thick,dotted] (4,-.12) -- (8.5,-.12);

\draw[thick,dotted] (4,1) -- (8.5,1);

\end{tikzpicture}

\caption{The scenario dealt with in Lemma~\ref{lbc1}.}\label{fig adjacency}

\end{figure}

\begin{lemma}\label{lbc1}
Let $G= (V_1\sqcup V_2,E)$ be a 2SUIIG with upper partition $V_2$. 
For vertices $x,y \in V_1$ with $|N_B(x)| = 1$ and $s_x<s_y$ we have $s_b - 2 < s_a$ where $b\in N_B(x)$ and $a \in N_B(y)$.
\end{lemma}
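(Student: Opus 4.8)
The plan is to argue by contradiction in a fixed 2SUIIG representation $R$ of $G$, mirroring the geometric "divide the strip into two parts" idea used in Lemmas~\ref{lbc2} and~\ref{orient}. Suppose $x,y \in V_1$ with $|N_B(x)| = 1$ and $s_x < s_y$, let $b$ be the unique vertex of $N_B(x)$, let $a \in N_B(y)$, and assume toward a contradiction that $s_b - 2 \geq s_a$, i.e. $s_a + 2 \leq s_b$. Since $a$ and $b$ both lie in the upper partition $V_2$, which induces an independent set, the unit squares of $a$ and $b$ are disjoint, so in particular $s_a < s_b$ is consistent with this (and in fact $s_a + 1 \leq s_b$ already; we are assuming the stronger $s_a + 2 \leq s_b$).

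First I would record the basic interval inequalities forced by the adjacencies. Because $x \sim b$ and both are unit squares with $x$ meeting the lower stab line and $b$ the upper stab line, their $x$-projections overlap, giving $s_b - 1 < s_x < s_b + 1$, hence in particular $s_x > s_b - 1 \geq s_a + 1$. Because $y \sim a$, similarly $s_a - 1 < s_y < s_a + 1$, hence $s_y < s_a + 1$. Combining, $s_y < s_a + 1 < s_x$, which contradicts the hypothesis $s_x < s_y$. This already looks like it closes the argument; the only subtlety is confirming that the one-sided bounds I used ($s_x > s_b - 1$ from $x \sim b$, and $s_y < s_a + 1$ from $y \sim a$) are exactly the constraints imposed by a bridge edge between a lower-stab-line unit square and an upper-stab-line unit square, and that no slack from the $\epsilon$-gap between the stab lines weakens them. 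Since the two stab lines are only distance $1+\epsilon$ apart with $0 < \epsilon < 1$ and each square has unit height, two squares on opposite stab lines intersect if and only if their $x$-projections intersect, so the horizontal overlap condition is precisely $|s_x - s_b| < 1$; this is the step I would state carefully.

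The main obstacle I anticipate is not the inequality manipulation but making sure the statement is being applied in the intended regime: one must check that $a \neq b$ is not needed (if $a = b$ the claim $s_b - 2 < s_a = s_b$ is trivial), and that $N_B(x)$ being a singleton is genuinely used — in fact the argument above uses only $b \in N_B(x)$ and $a \in N_B(y)$ together with $s_x < s_y$, so the $|N_B(x)| = 1$ hypothesis is there to control which vertex "$b$" denotes in the eventual matrix encoding (Lemma~\ref{ubn} guarantees $|N_B(x)| \le 2$, and when it is $1$ there is a canonical choice), not to drive the inequality. I would therefore present the proof as: assume $s_a + 2 \le s_b$, derive $s_x > s_b - 1 \ge s_a + 1 > s_y$ from the two bridge adjacencies, contradicting $s_x < s_y$; hence $s_b - 2 < s_a$.
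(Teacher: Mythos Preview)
Your argument is correct and matches the paper's own proof essentially line for line: assume $s_a \le s_b - 2$, use $s_x > s_b - 1$ from the bridge edge $xb$ and $s_y < s_a + 1$ from the bridge edge $ya$, and conclude $s_y < s_x$, contradicting $s_x < s_y$. Your side remarks (that $a=b$ is trivial, that the $\epsilon$-gap plays no role in the horizontal overlap condition, and that $|N_B(x)|=1$ is not actually needed for the inequality) are accurate and go slightly beyond what the paper spells out.
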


\begin{proof}
Let we have vertices $x,y \in V_1$ with $|N_B(x)| = 1$ and $s_x<s_y$ we have $s_a \leq s_b - 2$ 
where $b\in N_B(x)$ and $a \in N_B(y)$. Now $s_b < s_x+1$. If $s_a \leq s_b - 2$ 
then to realize the intersection between $y,a$ we will have $s_y < s_x$, 
which is a contradiction. Note that the bound $s_b - 2$ is tight as 
we can have a situation illustrated in Fig.~\ref{fig adjacency}. 
Moreover in such situations $b$ must be adjacent to $y$ also. 
\end{proof}

Intuitively, Lemma~\ref{lbc2} and Lemma ~\ref{lbc1} show that in a 2SUIIG representation, 
the ordering of the intervals corresponding to the vertices in $G[V_1]$ fixes an ordering of the intervals corresponding to 
the vertices in $G[V_2]$.

Let $G= (V_1\sqcup V_2,E)$ be a 2SUIIG with upper partition $V_2$. Assume that $|V_1| = m$ and $|V_2| = n$
We know that it admits a strict partition.  
From this strict partition we will construct a matrix which we will prove is a PSA matrix.
Note that $G[V_1]$ is a unit interval graph and hence Mertzios~\cite{Mer} provides a (adjacency) SNIR matrix $\mathscr{A}_{m\times m}$ 
of $G[V_1]$. 
This matrix is obtained by putting the vertices of $V_1$ in a particular order.
We will show that, we can obtain a particular order of $V_2$ such that the biadjacency matrix 
$\mathscr{A}^{'}_{m\times n}$ of $V_1$ (taken in the same order as above) and $V_2$ is a PSA matrix.

Call it $\mathscr{A}_{m\times m}$. $\mathscr{A}^{'}_{m\times n}$ be a zero one matrix. 
An entry $\mathscr{A}^{'}_{ij}$ is 1 if there is a bridge between $i,j$.
The ordering of the rows of $\mathscr{A}^{'}$ remains same as the ordering of the rows in $\mathscr{A}$.
The ordering of the columns of $\mathscr{A}^{'}$ corresponds to an ordering of the vertices of $V_2$.

We order the vertices of $V_2$ using the following prescribed rule. 
Recall that the adjacency matrix $\mathscr{A}$ of $G[V_1]$ produces an interval intersection
 representation of (may not be unique) 
the graph.
Fix one such representation $R$.
Suppose we have $u,v \in V_1$ and $u',v' \in V_2$ such that 
$u' \in N_B(u)\setminus N_B(v)$ and $v' \in N_B(v)$.
If $u,v$ are not twins (that is, have the same set of neighbors) in $G[V_1]$ and $s_u < s_v$ in $R$ then
we want  $u'< v'$
in our ordering of $V_2$.
In any other case we take an arbitrary ordering between a pair of vertices  $u',v' \in V_2$.
This is a well defined algorithm for getting the ordering due to the previously proved lemmas.

\begin{lemma}\label{PSA}
The matrix $\mathscr{A}^{'}$ is a PSA matrix.
\end{lemma}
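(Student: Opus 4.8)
The plan is to verify the three defining properties of a PSA matrix directly for $\mathscr{A}'$, using the geometric lemmas already established (Lemma~\ref{ubn}, Lemma~\ref{ubn_clique}, Lemma~\ref{lbc2}, Lemma~\ref{lbc1}) together with the prescribed column ordering of $V_2$. Throughout I will keep the representation $R$ fixed and use $s_v$ for the bottom-left $x$-coordinate of the unit square of $v$; since $G[V_1]$ is a unit interval graph, two vertices of $V_1$ are adjacent iff their $s$-values differ by less than $1$, and the row order of $\mathscr{A}'$ is the $s$-increasing order on $V_1$ (the SNIR order of Mertzios).

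First I would check property (i): each row of $\mathscr{A}'$ has its $1$'s consecutive and has at most two $1$'s. The "at most two" part is exactly Lemma~\ref{ubn}: a vertex $v\in V_1$ has $|N_B(v)|\le 2$. For consecutiveness, suppose $v\in V_1$ has $N_B(v)=\{x,z\}$ with $s_x<s_z$; I must show no column strictly between $x$ and $z$ (in the $V_2$-ordering) corresponds to a vertex adjacent to $v$ — but a vertex adjacent to $v$ and sitting between $x$ and $z$ would be a third bridge neighbour, again contradicting Lemma~\ref{ubn}. The subtlety is that the column order of $V_2$ is only partially pinned down by the prescribed rule (ties broken arbitrarily); I would argue that the "$s_x<s_y<s_z$ impossible" clause of Lemma~\ref{ubn} forces $x$ and $z$ to be adjacent in the induced order regardless of how ties are broken, so consecutiveness holds.

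Next, properties (ii) and (iii) compare $First(\mathscr{A}'_{j*})$ and $First(\mathscr{A}'_{i*})$ for rows $j<i$, i.e.\ for $u,v\in V_1$ with $s_u<s_v$. Here I would translate the claim into: if $u$ has a bridge neighbour and $v$ has a bridge neighbour, then the leftmost bridge neighbour of $u$ is not too far to the right of the leftmost bridge neighbour of $v$. When $|N_B(u)|=2$, this is precisely Lemma~\ref{lbc2}: there is $w\in N_B(u)$ with $s_w\le s_x$ for every $x\in N_B(v)$, so $First(\mathscr{A}'_{u*})\le First(\mathscr{A}'_{v*})$, giving (ii). When $|N_B(u)|=1$, Lemma~\ref{lbc1} gives $s_b-2<s_a$ for $b\in N_B(u)$, $a\in N_B(v)$; combined with the fact that any vertex of $V_2$ lying strictly between two $V_2$-vertices shifts the column index by at least one (using $s_x+1<s_y$ for independent $x,y\in V_2$ as in the proof of Lemma~\ref{ubn}), this translates into $First(\mathscr{A}'_{u*})\le First(\mathscr{A}'_{v*})+1$, with equality only when $v$'s single leftmost bridge neighbour forces $v$ to have two bridge neighbours — matching the equality clause of (iii). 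The main obstacle I anticipate is exactly this bookkeeping: converting the metric inequalities on $s$-values (statements about $+1$, $+2$, $\epsilon$-gaps) into the combinatorial $First(\cdot)\le First(\cdot)+1$ statement about column indices, while being careful that the prescribed ordering rule only constrains non-twin pairs $u,v$ and that twins in $G[V_1]$ need separate (easy) treatment. Once the dictionary between "$s$-coordinate gap" and "number of intervening columns" is set up cleanly, each of (i), (ii), (iii) follows by quoting the corresponding lemma.
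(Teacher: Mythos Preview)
Your proposal is correct and follows essentially the same approach as the paper: verify property~(i) via Lemma~\ref{ubn}, property~(ii) via Lemma~\ref{lbc2}, and property~(iii) via Lemma~\ref{lbc1}. You are in fact more explicit than the paper about the bookkeeping needed to pass from the metric inequalities on $s$-values to statements about column indices, and about the ambiguity in the $V_2$-ordering when ties are broken arbitrarily or when twins occur---issues the paper's proof simply glosses over.
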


\begin{proof}
Suppose $\mathscr{A}^{'}$ is not PSA matrix. Note that, there cannot be any zero column in it as we have assumed a strict
partition of the graph. Now we will check all the properties mentioned in the definition of the PSA matrix.

\medskip

\textit{Property $(i)$:} Due to Lemma~\ref{ubn} each vertex in $V_1$ have at most two neighbours 
and they are consecutive. 

\medskip

\textit{Property $(ii)$:} Suppose a vertex in $v \in V_1$ has two bridge neighbours and $w$ be the left most bridge
neighbour of $v$. $u$ be another vertex in $V_1$ lying to the right of $v$. 
From Lemma~\ref{lbc2}, we know that the bridge neighbours of 
$u$ will lie to the right of $w$. Hence, this property is also satisfied. 

\medskip

\textit{Property $(iii)$:}  Assume that this property is not satisfied and $x,y$ be the
vertices corresponding to the rows violating the property.
Now $s_x < s_y$. Let $b$ be the bridge neighbour of $x$.
There is a bridge neighbour $a$ of $y$
such that there is another vertex $c \in V_2$ with $s_a < s_c < s_b$. 
Hence, $s_b - 2 > s_a$.  This contradicts Lemma~\ref{lbc1}. $\hfill \square$
\end{proof}

Now we are ready to prove our main theorem.

\medskip

\noindent  \textbf{\textit{Proof of Theorem~\ref{matrix}:}}
First we will prove the ``if" part. 
Let $G= (V_1\sqcup V_2,E)$ be a 2SUIIG with upper partition $V_2$.
Then by the discussions above Lemma~\ref{PSA} we obtained an ordering of the vertices of 
$V_1$ and $V_2$. Consider the ordering of $V_1 \sqcup V_2$ by putting the vertices of $V_1$ 
in the previously obtained order followed by the vertices of $V_2$ (in previously obtained order as well).
This ordering will give us a matrix of the following type:

\begin{centering}

\[ Adj(G) = \left[ \begin{array}{c|c}
 \mathscr{A} & \mathscr{A'} \\
 \hline
 \mathscr{A'}^t & 0 \end{array} \right]\]

\end{centering}
 
From the previous lemmas we know that the above matrix satisfies the first two properties for being a ISSR matrix.
Now we check the remaining two properties.

\textit{Property $(iii)$:} As a consequence of Lemma~\ref{ubn} any vertex in $V_2$ can have at most two bridge neighbours 
which are pair wise independent.

\medskip

\textit{Property $(iv)$:} As a consequence of Lemma~\ref{ubn_clique} for any vertices in $G[V_1]$,
the vertices of the shortest path between $u,v$ in $G[V_1]$
can have at most a total of $p+1$ bridge neighbours where $p$ is the length of the shortest path.

For the ``only if'' part, given a ISSR matrix we can 
construct a unit interval graph (and generate the intervals corresponding to the vertices) with the SNIR sub-matrix of it.
Now we show that the column ordering of the PSA sub-matrix generates the intervals of the rest of the vertices.
If the PSA sub-matrix have only one non-zero cell, then there is only one vertex, $x$. 
Due to property (iii) of ISSR matrix the unit square $s_x$ can be generated 
and the bridge edge can be realized. Assume that the theorem is true for all
ISSR matrix whose PSA sub-matrix have $k$ columns. Consider a ISSR matrix $M$
whose PSA sub-matrix have $m=k+1$ columns. Consider $M'$ as the matrix 
which is obtained by deleting the last columns of the PSA sub-matrix of $M$.
From induction hypothesis, there is a 2SUIIG representation of $M'$, say $R$.
Let $x$ be a vertex corresponding to the $(k+1)^{th}$ column.
Let $Y$ be the set of vertices corresponding to the rows $i$ such that
$M_{i(k+1)} = 1$. $y$ be an element of $Y$. So, in any 2SUIIG representation
of $M$, $|s_y - s_x | \leq 1$.

\textbf{case 1.} Let there is a vertex $y \in Y$ such that for any $s_x$ with 
$s_y < s_x < s_y + 1$ in $R$, $G[V_2]$ remains independent but 
the bridge edge $xy$ cannot be realised.
The bridge edge $xy$ cannot be realised implies there are $u,v$ in $V_2,V_1$ 
respectively such that $s_u < s_x,s_y < s_v$ and $u,v$ are adjacent.
If $y,u$ were adjacent, then we could have realised the bridge edge between $x,y$.
Again if $v,x$ were not adjacent then $M$ would violate property (iii) of the
definition of PSA matrix. The adjacency of $v,x$ ensures that the bridge edge between $x,y$
can be realised.

\textbf{case 2.} Let there is a vertex $y \in Y$ such that for any $s_x$ with 
$s_y < s_x < s_y + 1$ in $R$, $G[V_2]$ does not remain independent.
This implies that there exists $u,v$ in $R$
such that if $s_y < s_x < s_y + 1$ and $G[V_2]$ is an independent set
then the bridge edge between $u,v$ 
cannot be realised.
Let $Z=~\{a:~a~\in~V_2~\text{ and }~s_u-1~< s_a <~s_y+1~\text{ in R}\}$. 
Let $P=[u_1u_2\ldots u_k]$ be the shortest path from $u$ to $y$ in $G[V_1]$
where $u_1=u$ and $u_k=y$. 
Now $ |Z| = k+1$, otherwise the bridge edge between $x,y$ can be realised.
But then $M$, violates the definition of ISSR matrix.

Property (iii) of ISSR matrix insures that for all $y,z \in Y$ with $s_y < s_z$ in $R$,
if $s_y < s_x < s_y+1$ we can have $|s_z - s_x | \leq 1$. Hence, all the bridge edges
between $x$ and the elements of $Y$ can be realised.
This completes the proof.

\section{Conclusion}\label{conclusion}
The complexity of recognizing 2SIG is not known and hence is an interesting future problem. 
Domination number of 2SIG is one of the prospective areas of research as
the problem is polynomial time solvable for interval graphs but is $NP$-hard for boxicity 2 graphs.
One can also generalize the concept 
to define $k$-stab interval graphs and study its different aspects.




\bibliographystyle{abbrv}
\bibliography{science}

\end{document}